\newtheorem{theorem}{Theorem}
\newtheorem{definition}{Definition}[section]
\newtheorem{lemma}[theorem]{Lemma}
\newtheorem{remark}[theorem]{Remark}
\def\contract{\makebox[1.2em][c]{\mbox{\rule{.6em}
{.01truein}\rule{.01truein}{.6em}}}}
\begin{document}

\title{Stochastic Geometric Mechanics for Fluid Dynamics}
\author[1a,1b]{Darryl D. Holm} 
\author[2a,2b]{Erwin Luesink}

\affil[1a]{Department of Mathematics, Imperial College London, SW7 2BX London, United Kingdom}
\affil[1b]{d.holm@imperial.ac.uk, ORCID ID: 0000-0001-6362-9912}
\affil[2a]{Department of Mathematics, Faculty EEMCS, University of Twente, PO Box 217, 7500 AE Enschede, The Netherlands}
\affil[2b]{e.luesink@utwente.nl, ORCID ID: 0000-0003-1804-6800}
\date{}

\maketitle
\begin{abstract}
Stochastic geometric mechanics (SGM) is known for its potential utility in quantifying uncertainty in  global climate modelling of the Earth's ocean and atmosphere while also preserving the fundamental advective transport properties of ideal fluid flow. The present chapter describes the mathematical development of the framework of stochastic geometric mechanics in the context of fluid flow and wave dynamics obtained from Hamilton's variational principle.
\end{abstract}

\section*{Key words}
\begin{center}
$\bullet\,$ Geometric mechanics $\qquad$ $\bullet\,$ Fluid dynamics $\qquad$ $\bullet\,$ Stochastic partial differential equations\\
$\bullet\,$ Lie groups $\qquad$ $\bullet\,$ Diffeomorphism group $\qquad$ $\bullet\,$ Sobolev spaces $\qquad$ $\bullet\,$ Momentum maps
\end{center}

\tableofcontents

\section{Introduction}

Euler’s classic theory of ideal fluid dynamics represents the motion of all the fluid particles in a container as a time-dependent curve on the manifold of volume-preserving smooth invertible maps now called diffeomorphisms. Moreover, this curve describing the sequential actions of the volume-preserving diffeomorphisms on the fluid domain is a special optimal curve that distills the fluid motion into a single statement. Namely, “A fluid moves to get out of its own way as efficiently as possible.” Put more mathematically, Euler fluid flow occurs along a time-dependent curve in the manifold of volume-preserving diffeomorphisms which is a geodesic with respect to the metric on its tangent space supplied by its kinetic energy, as noted by \cite{arnold1966geometrie}.


Thus, the problem of determining the kinematic motion of an incompressible fluid was transformed by \cite{arnold1966geometrie} into a geometric mechanics problem which can be expressed mathematically as a variational principle. Variational principles are not restricted to the kinematic motion of ideal fluids. If the forces arise from a potential, then variational principles can also be employed to derive the equations of compressible fluids with advected quantities. The Lagrangian is the key to extending fluid models beyond the kinematic case. The Lagrangian is a functional that describes energetic interactions obtained by transforming between kinetic and potential energy. If the Lagrangian is invariant under the action of a Lie group, then Noether's theorem \cite{noether1918invariante} implies the existence of a conservation law corresponding to each Lie symmetry. This means that the state space of the fluid can be split into two sets: variables that express dynamics; and functions of the dynamical variables that express Noether conservation laws. Fluids possess the particle relabelling symmetry, for example, which permits the description of fluids by means of both particles (Lagrangian point of view) and fields (Eulerian point of view). The particle relabelling symmetry of the Eulerian representation induces an infinite family of conservation laws defined on closed loops of fluid material moving with the flow. These conservation laws are derived from the Kelvin-Noether theorem for fluid circulation in geometric mechanics, as shown, e.g., in \cite{holm1998euler,cotter2013noether}. 

Geometric mechanics is particularly useful in the context of geophysical fluid dynamics (GFD). This is because GFD applies at planetary scales such as Earth's ocean and atmosphere. Motion at planetary scales involves large masses whose fluid motions are essentially unaffected by viscosity. Hence, provided that the GFD models are energetically closed, Hamilton's principle is able to produce GFD models whose underlying geometry helps to understand conserved quantities and can be used as a guide for numerical discretisation. However, formulating appropriate GFD models for atmospheric or oceanic processes remains a challenging problem, since one must deal with vast ranges of spatial and temporal scales whose associated dynamical interactions involve a large number of disparate and often unknown processes. Such a lack of information can be interpreted as a representation error. Representation errors are not the only source of modelling errors, though. Together with numerical errors and observation errors, many physical parameterisations such as the air-sea interaction dynamics contain uncontrolled approximations. The many sources of uncertainty arising from incomplete information 
motivate the usage of stochastic parametrisations. However, these stochastic parametrisations need to preserve the fundamental geometric structure that underlies the GFD models, particularly the Kelvin-Noether circulation theorem. Stochastic parametrisations that preserve this geometric structure can be achieved via stochastic geometric mechanics.

\section{Noether's theorem and the momentum map}
\subsection*{Key points}
\begin{center}
    $\bullet\,$ Lie symmetries $\qquad$
    $\bullet\,$ Noether's theorem $\qquad$
    $\bullet\,$ Momentum map $\qquad$
    $\bullet\,$ Reduction by symmetry
\end{center}
Geometric mechanics deals with group-invariant variational principles. Its origin goes back to the early 1900s, where \cite{poincare1901forme} showed that when a Lie algebra acts locally transitively on the configuration space $M$ of a Lagrangian mechanical system, then the Euler-Lagrange equations are equivalent to a new system of differential equations on the product of the configuration space with the Lie algebra. These equations are now known as the Euler-Poincar\'e equations and play a crucial role in fluid dynamics. The original work of Poincar\'e is presented in modern language in the paper \cite{marle2013henri}. 

Euler-Poincar\'e equations arise when the action of a Lie group $G$ with Lie algebra $\mathfrak{g}$ on a manifold $M$ is locally transitive. In terms of the tangent lift action $G\times TM\to TM$ of a Lie group $G$ on the tangent bundle $TM$ of a manifold (or configuration space) $M$ on which $G$ acts transitively, Noether's theorem states that each continuous symmetry of a Lagrangian $L:TM\to \mathbb{R}$ defined in the action integral $S=\int L(q,v)dt$ for Hamilton's variational principle $\delta S = 0$ with $(q,v)\in TM$ implies a conserved quantity for the corresponding Euler-Lagrange equations.

The conserved quantities arising from Noether's theorem, \cite{noether1918invariante}, are dual to the infinitesimal symmetries. These dual quantities are momenta and the map $J:T^*M\to\mathfrak{g}^*$ that transforms variables from the cotangent bundle $T^*M$ to the dual $\mathfrak{g}^*$ of the Lie algebra $\mathfrak{g}$ associated with the Lie group $G$ is known as the \emph{momentum map}. The momentum map is the central object in geometric mechanics. It was introduced in various different ways and levels of generality by \cite{kirillov1962unitary}, \cite{kostant1970quantization}, \cite{souriau1970structure} and \cite{smale1970topologya, smale1970topologyb}.

The dynamical variable $m\in\mathfrak{g}^*$ in the dual Lie algebra is the momentum. In general, the configuration manifold $M$ is not a Lie group. However, when a Lie group $G$ acts transitively on a configuration manifold $M$ the proof of Noether's theorem induces a cotangent-lift momentum map $J: T^*M\to\mathfrak{g}^*$. The momentum map induced this way is an infinitesimally equivariant Poisson map taking functions on the cotangent bundle $T^*M$ of $M$ to the dual Lie algebra $\mathfrak{g}^*$ of the Lie group $G$. One can also consider momentum maps that are not the cotangent lift momentum map, when the momentum map $J:P\to \mathfrak{g}^*$, where $P$ is any symplectic manifold. In this more general setting, the momentum map is not necessarily ${\rm Ad}^*$-equivariant and cocycles start to play a role. This is the setting of symplectic mechanics, as introduced by \cite{souriau1970structure}. The lack of equivariance plays a role when one considers the group of diffeomorphisms over the circle, as we will show in section \ref{sec:circle}.

The cotangent lift momentum map $J: T^*M\to\mathfrak{g}^*$ is equivariant and Poisson, even if $G$ is not a Lie symmetry of the Lagrangian in Hamilton's principle. Momentum maps naturally lead from the Lagrangian side to the Hamiltonian side. The Hamiltonian dynamics on $T^*M$ involves symplectic transformations. However, as we shall discuss below, for the class of Hamiltonians which can be defined as $H\circ J: \mathfrak{g}^*\to \mathbb{R}$, the momentum map induces Euler-Poincar\'e motion on the Lagrangian side and Lie-Poisson motion on the Hamiltonian side. The momentum map connects the Hamiltonian reduction techniques of  \cite{marsden1974reduction} with the Lagrangian reduction techniques of \cite{holm1998euler}. To illustrate this equivalence, we return to the situation in which the configuration manifold, $M$, is a Lie group, $G$.

One may reconstruct the solution on $G$ from its representation on $T^*G\setminus G\simeq\mathfrak{g}^*$. In that case, solving the equations describing the evolution of the momentum map on the dual Lie algebra $\mathfrak{g}^*$ is equivalent to solving the equations on the cotangent bundle $T^*G$ when the configuration manifold is $G$. When the Lie group $G$ acts transitively, freely and properly on the configuration manifold $M$, then one may reconstruct the solution on $M$ from its representation on $T^*G\setminus G\simeq\mathfrak{g}^*$. The last statement is proved for finite-dimensional Lie groups $G$ in, e.g., \cite{abraham1978foundations}. Provided the Lagrangian or Hamiltonian is hyperregular, the Legendre transform is a diffeomorphism. The Euler-Poincar\'e reduction procedure can then be expressed in terms of the cube of linked commutative diagrams shown in figure \ref{fig:cube}. 
\begin{figure}[H]
\small
\centering
\begin{tikzcd}[row sep=3em, column sep=small]
& 
L:TG\to\mathbb{R} \arrow[dl] \arrow[rr,  "\text{Legendre transform}", leftrightarrow] \arrow[dd] 
& 
& 
H:T^*G\to\mathbb{R} \arrow[dl] \arrow[dd]
\\
\text{Euler-Lagrange eqns} \arrow[rr, crossing over, Leftrightarrow] 
& 
& \text{Hamilton's eqns}
\\
& \ell:\mathfrak{g}\to\mathbb{R} \arrow[dl] \arrow[rr, "\text{Legendre \hspace{0.25cm} transform}", leftrightarrow] 
& 
& \hslash:\mathfrak{g}^*\to\mathbb{R} \arrow[dl]
\\
\text{Euler-Poincar\'e eqns} \arrow[rr, Leftrightarrow] \arrow[from=uu, crossing over]
& 
& \text{Lie-Poisson eqns} \arrow[from=uu, crossing over]
\end{tikzcd}
\caption{The cube of commutative diagrams for geometric mechanics on Lie groups. Euler-Poincar\'e reduction (on the left side) and Lie-Poisson reduction (on the right side) are both indicated by the arrows pointing down. The diagrams are all commutative, provided the Legendre transformation and reduced Legendre transformation are both invertible.}
\label{fig:cube}
\end{figure}
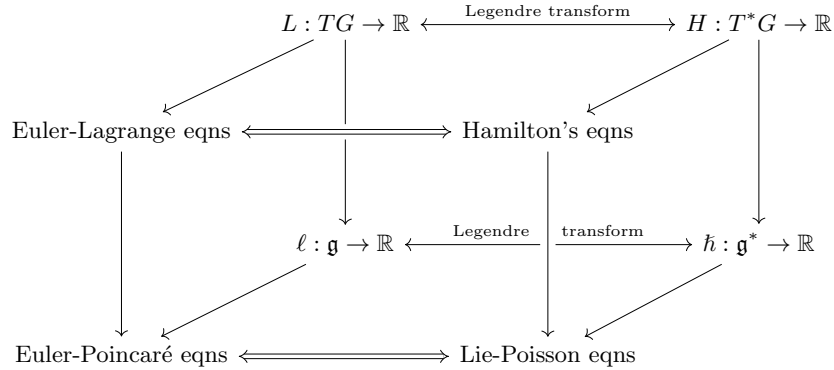
The notation in Figure \ref{fig:cube} is as follows: $G$ denotes the configuration manifold which is assumed to be isomorphic to a Lie group, $TG$ is the tangent bundle, $T^*G$ is the cotangent bundle, $TG\setminus G \simeq \mathfrak{g}$ is the Lie algebra and $T^*G\setminus G\simeq \mathfrak{g}^*$ is the dual of the Lie algebra. The Lagrangian is a functional $L:TG\to\mathbb{R}$ and the Hamiltonian is a functional $H:T^*G\to\mathbb{R}$. Euler-Poincar\'e reduction takes advantage of Lie group symmetries to transform the Lagrangian and Hamiltonian into group-invariant variables, which leads to a reduced Lagrangian $\ell:\mathfrak{g}\to\mathbb{R}$ and a reduced Hamiltonian $\hslash:\mathfrak{g}^*\to\mathbb{R}$. The diagram comprising the face of the cube involving these functionals in figure \ref{fig:cube} commutes if the Legendre transform is a diffeomorphism. This is guaranteed if the Lagrangian or Hamiltonian is hyperregular. The Euler-Lagrange equations and Hamilton's equations are related via an invertible change of variables, which also holds for the Euler-Poincar\'e equations and the Lie-Poisson equations. The invertibility of change of variables follows from the Legendre transform being a diffeomorphism. Many finite dimensional mechanical systems may be described naturally in this framework. The classic example is the rotating rigid body, discussed from the viewpoint of symmetry reduction by Poincar\'e in \cite{poincare1901forme}. In his 1901 paper, Poincar\'e also raised the issue of \emph{symmetry breaking}, by introducing the vertical acceleration of gravity, which breaks the  $SO(3)$  symmetry for free rotation and restricts it to  $SO(2)$ for rotations about the vertical axis. 

Figure \ref{fig:cube} describes geometric mechanics in the finite dimensional setting without symmetry breaking, but cannot describe the infinite dimensional setting due to presence of the mass density, which breaks symmetry. We will show this in the next section, where we investigate the extension of Figure \ref{fig:cube} to the setting of broken symmetries and stochasticity. We describe the geometric mechanics with the diffeomorphism group first for generic Riemannian manifolds, then we introduce stochasticity and we specialise to diffeomorphisms over the circle in section \ref{sec:circle}. 

Stochasticity can be included in the framework of Euler-Poincar\'e reduction by symmetry. The first attempt to include noise consistently in finite-dimensional symplectic Hamiltonian mechanics was by \cite{bismut1982mecanique} and reduction by symmetry of stochastic systems was studied by \cite{lazaro2008stochastic}. In the present work, we will review Euler-Poincar\'e reduction of stochastic infinite dimensional variational systems with symmetry breaking. The infinite dimensional case is interesting because it is the natural setting for fluid dynamics, quantum mechanics and elasticity. We will explore the infinite dimensional case in context of geophysical fluid dynamics, where symmetry under the smooth invertible maps of the flow domain is broken by the spatial dependence of the initial mass density. For general smooth manifolds, the geometry of the diffeomorphism group over those manifolds is poorly understood with a few exceptions. One such exception is the group of diffeomorphisms over the circle, which we will investigate in the stochastic setting.

\section{Sobolev class diffeomorphisms}
\subsection*{Key points}
\begin{center}
    $\bullet\,$ Diffeomorphism group $\qquad$
    $\bullet\,$ Riemannian structure $\qquad$
    $\bullet\,$ Differential forms $\qquad$
    $\bullet\,$ Duality and vector fields
\end{center}
The discovery of \cite{arnold1966geometrie} initiated intensive study into groups of diffeomorphisms over compact manifolds. An extensive and detailed review can be found in \cite{smolentsev2007diffeomorphism}. Here we will recall some of the key results. Consider an $n$-dimensional compact and oriented smooth manifold $M$ without boundary, equipped with a smooth Riemannian metric $g$. Common examples of manifolds without boundary are spheres and tori, which appear naturally in geophysical fluid dynamics and plasma physics. The Riemannian structure induces a pointwise inner product on the tangent bundle $TM$, i.e., for $X,Y\in T_xM$ we denote the inner product $g(x)(X,Y)$ by $\langle X,Y\rangle_x$. This can be extended in the usual way to an inner product on the bundles $T^p_q M$, where we denote the inner product by the same symbol. If $U,V\in \Gamma(T_q^pM)$ are two tensor fields, then their inner product is given by 
\begin{equation}\label{eq:l2inner}
    (U,V) = \int_M \langle U(x),V(x)\rangle_x\, d\mu(x).
\end{equation}
This inner product induces an $L^2$-type topology on the space of tensor fields. It is weak because the $C^\infty$-topology of uniform convergence of all derivatives is a natural topology on the space of smooth tensor fields and is stronger than the $L^2$-type. One can also equip the space of tensor fields with the $C^k$-topology, which is defined for a tensor field $U$ as
\begin{equation}
    |U|_k = \sum_{i=0}^k \sup_{x\in M}\sqrt{\langle\nabla^{(i)}U,\nabla^{(i)}U\rangle_x},
\end{equation}
where $\nabla^{(i)} = \nabla\circ\cdots\circ \nabla$ is the $i$th power of the covariant derivative and the norm $\|\,\cdot\,\|_x$ is the square root of the pointwise inner product. The $C^\infty$-topology is defined by the family of norms $|U|_k$ for $k\geq 0$, and in this case the space of tensor fields is a Fr\'echet space. For Fr\'echet spaces, there does not exist an inverse or implicit function theorem, nor does there exist a general solution theory for ordinary differential equations. The space of tensor fields of class $C^k$ is a Banach space with respect to the norm $|\,\cdot\,|_k$, which is much better than the Fr\'echet setting. However, for many problems it is more convenient to have a Hilbert space structure. This is achieved by considering an inner product that is stronger than the $L^2$-inner product \eqref{eq:l2inner}. Let $s\geq 0$ and let $U,V\in \Gamma(T^p_q M)$, then the Sobolev class tensor fields are given by
\begin{equation}\label{eq:hsinner}
    (U,V)_s = \sum_{i=0}^s \int_M \left\langle \nabla^{(i)}U,\nabla^{(i)}V\right\rangle_x \,d\mu(x).
\end{equation}
Let $\Gamma^s(T_q^p M)$ be the completion of $\Gamma(T^p_q M)$ with respect to the topology induced by \eqref{eq:hsinner}. By the Sobolev embedding theorems one has the usual relations between spaces of varying degrees of smoothness. The reason that one can still study the differential geometry of spaces equipped with Sobolev class topologies without having to count derivatives is due to the the following theorem, which can be found in \cite{palais1966foundations}.
\begin{theorem}
    Let $E$ and $F$ be two vector bundles over $M$ and let $f:E\to F$ be a fiber-preserving smooth mapping. If $s\geq n/2+1$, then the mapping $\phi:\Gamma^s(E)\to\Gamma^s(F)$ defined by $\phi(\alpha) = f\circ\alpha$ is a map of class $C^\infty$.
\end{theorem}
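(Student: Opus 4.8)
The plan is to reduce the global statement on $M$ to a local one in Euclidean charts and then to exploit the two analytic pillars available once $s>n/2$: the Sobolev embedding $\Gamma^s\hookrightarrow C^0$ and the fact that $\Gamma^s$ is a Banach algebra under pointwise multiplication. The hypothesis $s\geq n/2+1$ comfortably exceeds the threshold $s>n/2$ needed for these facts, and in addition yields the embedding into $C^1$ that makes the ambient diffeomorphism group well behaved. To localise, since $M$ is compact I would choose a finite atlas together with local trivialisations of $E$ and $F$ and a subordinate partition of unity. Because $f$ is fiber-preserving it covers the identity on $M$, so in such a chart it is represented by a smooth map $\bar f\colon U\times\mathbb{R}^k\to\mathbb{R}^\ell$, where $\mathbb{R}^k$ and $\mathbb{R}^\ell$ are the typical fibers of $E$ and $F$. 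A section $\alpha\in\Gamma^s(E)$ is then an $H^s$ function $a\colon U\to\mathbb{R}^k$, and $\phi$ becomes the Nemytskii operator $a\mapsto\bar f(\,\cdot\,,a(\,\cdot\,))$. It suffices to prove this local operator is smooth from $H^s(U,\mathbb{R}^k)$ to $H^s(U,\mathbb{R}^\ell)$, since the finite sums and the multiplications by the fixed cut-off functions of the partition of unity are bounded linear and hence smooth.

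Next I would show $\phi$ is well defined and continuous by a Moser-type estimate. For $a\in H^s$ with $s>n/2$ the embedding makes $a$ bounded and continuous, so $\bar f(\,\cdot\,,a(\,\cdot\,))$ is pointwise defined; differentiating up to order $s$ via the chain rule produces finite sums of products of derivatives of $\bar f$ evaluated along the bounded continuous curve $x\mapsto(x,a(x))$ and derivatives of $a$ of total order at most $s$, and the Banach-algebra and interpolation estimates bound each such product in $H^s$. Applying the same estimates to differences gives continuity of $a\mapsto\bar f(\,\cdot\,,a(\,\cdot\,))$. I would record this as a lemma, since it is precisely the ingredient needed again at each stage of the differentiability argument.

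Then I would identify the candidate derivatives and prove smoothness by induction on the order of differentiation. The natural candidate is
\[ D^j\phi(a)(b_1,\dots,b_j)(x)=\partial_v^{\,j}\bar f\bigl(x,a(x)\bigr)\bigl(b_1(x),\dots,b_j(x)\bigr), \]
that is, multiplication by the $H^s$-coefficient $\partial_v^{\,j}\bar f(\,\cdot\,,a(\,\cdot\,))$. To verify that $D\phi(a)$ is the Fr\'echet derivative I would use Taylor's formula with integral remainder,
\[ \phi(a+b)-\phi(a)=\Bigl(\int_0^1\partial_v\bar f\bigl(\,\cdot\,,a+tb\bigr)\,dt\Bigr)\,b, \]
so that the remainder $\phi(a+b)-\phi(a)-D\phi(a)b$ equals $\bigl[\int_0^1\partial_v\bar f(\,\cdot\,,a+tb)\,dt-\partial_v\bar f(\,\cdot\,,a)\bigr]\,b$. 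The bracketed factor tends to $0$ in $H^s$ as $\|b\|_s\to0$ by the continuity of the composition operator for $\partial_v f$ established above, and multiplication by it is bounded on $H^s$; hence the remainder is $o(\|b\|_s)$. This shows $\phi$ is $C^1$ and, crucially, reduces the existence and continuity of $D\phi$ to continuity of the composition operator attached to $\partial_v f$. Iterating, differentiability to order $j$ reduces to continuity of the composition operators attached to the derivatives of $f$ up to order $j$; since $f$ is $C^\infty$ all of these are available, so $\phi$ is $C^\infty$.

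The main obstacle is the Moser-type estimate of the second step: establishing that composition with a smooth fiber map sends $\Gamma^s$ into $\Gamma^s$ continuously. Everything downstream---the derivative formula, the remainder bound, and the induction---is bookkeeping on top of it. The delicate points are that for non-integer $s$ the chain-rule expansion must be replaced by fractional product (Gagliardo--Nirenberg) estimates, and that the constants must be controlled locally uniformly in terms of $\|a\|_{C^0}$; the compactness of $M$ is what keeps the chart sums finite and the fiber-map derivatives bounded on the relevant range.
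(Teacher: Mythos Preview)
The paper does not supply its own proof of this theorem; it is merely stated and attributed to \cite{palais1966foundations}, so there is no argument in the paper to compare against. Your sketch is the standard route taken in Palais and its descendants (Ebin, Eells, Inci--Kappeler--Topalov): localise by a finite atlas and partition of unity, recognise the local map as a Nemytskii operator, establish well-definedness and continuity via the Moser composition estimate using the Banach-algebra property of $H^s$ for $s>n/2$, and then bootstrap to $C^\infty$ by the Taylor-with-integral-remainder formula together with induction on the order of differentiation. Your identification of the Moser-type estimate as the only non-formal step is accurate; one small remark is that in the paper's setup $s$ is an integer (the norm is defined through iterated covariant derivatives), so the fractional Gagliardo--Nirenberg machinery you mention at the end is not needed here.
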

To study fluid dynamics from a topological point of view, one considers the spaces $C^\infty(M,N)$, $C^k(M,N)$ and $H^s(M,N)$ as infinite dimensional manifolds. The atlas of coordinate charts for these spaces in constructed in a canonical way, using the Riemannian structure of $M$ and $N$. This construction gives rise to atlas called the \emph{natural atlas}. To speak of diffeomorphism groups, these spaces need to be equipped with a product. Here one needs to be careful, since groups of diffeomorphisms are not Lie groups, which can be shown by means of the $\alpha-\omega$-theorems of \cite{abraham1963lectures}.
\begin{theorem}[$\alpha$]
    Let $h:M\to M'$ be a map of class $C^k$, then the mapping
    \begin{equation}
        \alpha_h:C^k(M',N)\to C^k(M,N), \qquad \alpha_h(f) = f\circ h,
    \end{equation}
    is a map of class $C^\infty$.
\end{theorem}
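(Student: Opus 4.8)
The plan is to exploit the fact that, although the mapping spaces $C^k(M',N)$ and $C^k(M,N)$ are genuinely nonlinear infinite-dimensional manifolds, the operation $\alpha_h$ of precomposition by the fixed map $h$ becomes \emph{linear} once it is read off in the natural charts built from the exponential map of a Riemannian metric on the target $N$. First I would fix an auxiliary Riemannian metric on $N$ and recall the construction of the natural atlas: around a given $f\in C^k(M',N)$ a chart is furnished by the bijection $\xi\mapsto \exp_f(\xi)$, where $\exp_f(\xi)(y)=\exp_{f(y)}(\xi(y))$ and $\xi$ ranges over a neighbourhood of the zero section in the Banach space $\Gamma^k(f^*TN)$ of $C^k$ sections of the pullback bundle $f^*TN\to M'$. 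The analogous chart around $f\circ h\in C^k(M,N)$ is modelled on $\Gamma^k((f\circ h)^*TN)$, the space of $C^k$ sections of $(f\circ h)^*TN=h^*f^*TN\to M$.

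Next I would compute the local representative of $\alpha_h$ in this pair of charts. For $g=\exp_f(\xi)$ near $f$ one has, pointwise,
\begin{equation}
\alpha_h(g)(x)=g(h(x))=\exp_{f(h(x))}\big(\xi(h(x))\big)=\exp_{(f\circ h)(x)}\big((h^*\xi)(x)\big),
\end{equation}
so that in the chart around $f\circ h$ the image $\alpha_h(g)$ is represented precisely by the pulled-back section $h^*\xi=\xi\circ h$. Two things must then be checked: that $\alpha_h$ indeed maps the chart domain around $f$ into the chart domain around $f\circ h$, so the charts are compatible with the map, and that the resulting assignment
\begin{equation}
\Gamma^k(f^*TN)\longrightarrow \Gamma^k\big((f\circ h)^*TN\big),\qquad \xi\longmapsto \xi\circ h,
\end{equation}
is a well-defined continuous \emph{linear} map of Banach spaces. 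Linearity is immediate, because the fibre of $(f\circ h)^*TN$ over $x$ is by definition the fibre of $f^*TN$ over $h(x)$, so pulling back a section is fibrewise the identity and evaluation at $h(x)$ depends linearly on $\xi$.

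The substantive estimate, and the step I expect to be the main obstacle, is the continuity (boundedness) of this pullback operator in the $C^k$ norms. I would work in local trivialisations, so that sections become $\mathbb{R}^{\dim N}$-valued $C^k$ functions and the covariant $C^k$ norm is comparable to the ordinary one (the discrepancy being lower-order Christoffel terms with bounded coefficients on the compact base); then $\xi\circ h$ is an ordinary composition of $C^k$ maps, and the chain rule (Fa\`a di Bruno) expresses its derivatives up to order $k$ as finite sums of products of derivatives of $\xi$ of order $\le k$ with derivatives of $h$ of order $\le k$. Since $M$ is compact and $h$ is of class $C^k$, the derivatives of $h$ up to order $k$ are uniformly bounded, yielding $|\xi\circ h|_k\le C(h)\,|\xi|_k$ with a constant depending only on $h$. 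This is exactly where the hypothesis that $h$ be $C^k$ is used, and where it suffices.

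Finally, I would invoke the elementary fact that a continuous linear map between Banach spaces is of class $C^\infty$: its first derivative is the map itself, constant in the base point, and all higher derivatives vanish. Since the local representative of $\alpha_h$ is such a bounded linear operator, $\alpha_h$ is smooth in the natural charts, and because these charts cover the mapping spaces the conclusion is chart-independent. Hence $\alpha_h:C^k(M',N)\to C^k(M,N)$ is of class $C^\infty$, as claimed. I would note that the ease of this argument, in contrast with the companion $\omega$-theorem for postcomposition, stems entirely from the linearity of precomposition in these charts.
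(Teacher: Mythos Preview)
The paper does not actually prove this statement; it merely records the $\alpha$-theorem as a classical result of \cite{abraham1963lectures} and uses it to explain why right translation on $\mathfrak{D}^s$ is smooth while left translation is not. Your argument is correct and is precisely the standard proof one finds in the literature (Palais, Ebin--Marsden): pass to natural exponential charts, observe that precomposition by the fixed $h$ becomes the linear pullback $\xi\mapsto\xi\circ h$ between the modelling Banach spaces of sections, verify boundedness of this operator via the chain rule using that $h\in C^k$ and $M$ is compact, and conclude $C^\infty$ smoothness from the fact that a bounded linear map is smooth with constant derivative. There is nothing to compare against in the paper itself, and no gap in your reasoning.
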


\begin{theorem}[$\omega$]
    Let $h:N\to N'$ be a map of class $C^{k+l}$, then the mapping
    \begin{equation}
        \omega_h:C^k(M,N)\to C^k(M,N'), \qquad \omega_h(f) = h\circ f,
    \end{equation}
    is a map of class $C^l$.
\end{theorem}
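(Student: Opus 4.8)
The plan is to reduce the statement to an \emph{omega lemma} for fiber-preserving bundle maps and then to run an induction on $l$ (with $k$ kept arbitrary). The organising intuition is that every differentiation of $\omega_h$ falls on the factor $h$, in contrast with $\alpha_h(f)=f\circ h$, where $h$ sits frozen in the source slot and is never differentiated, which is exactly why the $\alpha$-theorem loses no derivatives. Taking $l$ derivatives of $\omega_h$ therefore consumes $l$ derivatives of $h$, and starting from class $C^{k+l}$ leaves precisely the $C^k$ regularity needed to keep all images inside $C^k(M,N')$.

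First I would localise using the natural atlas. Fix $f_0\in C^k(M,N)$ and put $g_0=h\circ f_0\in C^k(M,N')$, which is legitimate since $h$ is in particular of class $C^k$. The charts built from the Riemannian exponential maps identify neighbourhoods of $f_0$ and $g_0$ with open sets in the Banach spaces $\Gamma^k(f_0^*TN)$ and $\Gamma^k(g_0^*TN')$ via $\xi\mapsto\exp_{f_0}(\xi)$ and $\eta\mapsto\exp_{g_0}(\eta)$. In these charts $\omega_h$ acts pointwise, $\tilde\omega_h(\xi)(x)=F(x,\xi(x))$, through the fiber-preserving map
\begin{equation}
    F(x,v)=\big(\exp_{g_0(x)}\big)^{-1}\Big(h\big(\exp_{f_0(x)}(v)\big)\Big).
\end{equation}
Because the metrics on $N$ and $N'$ are smooth, the exponential maps and their inverses are $C^\infty$, so $F$ inherits the regularity of $h$ and is a fiber-preserving bundle map of class $C^{k+l}$. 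The theorem is thereby equivalent to the omega lemma: if $F$ is a $C^{k+l}$ bundle map, then $\sigma\mapsto F\circ\sigma$ is a map of class $C^l$ from $\Gamma^k$ sections to $\Gamma^k$ sections.

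I would establish the omega lemma by induction on $l$. The base case $l=0$ is continuity: if $\sigma_n\to\sigma$ in $\Gamma^k$ then $F\circ\sigma_n\to F\circ\sigma$ in $\Gamma^k$, which follows from the Fa\`a di Bruno chain rule for the $C^k$-norm together with uniform continuity of $F$ and of its derivatives up to order $k$ on the relevant compact set; here compactness of $M$ is what makes the estimates uniform. For the inductive step, given $h\in C^{k+l}$ with $l\ge 1$, I would first show that $\omega_h$ is $C^1$ with derivative given in charts by
\begin{equation}
    D\tilde\omega_h(\xi)\cdot\tau=\big(x\mapsto D_2F\big(x,\xi(x)\big)\cdot\tau(x)\big),
\end{equation}
where $D_2F$ is the fiber-derivative; the content here is the remainder estimate showing this candidate is the genuine Fr\'echet derivative in the $C^k$-topology and that it lands in $\Gamma^k$. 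I would then invoke the structural identity
\begin{equation}
    TC^k(M,N)\cong C^k(M,TN),\qquad T\omega_h=\omega_{Th},
\end{equation}
in which $Th:TN\to TN'$ is of class $C^{k+l-1}=C^{k+(l-1)}$. By the inductive hypothesis applied to $Th$, the map $\omega_{Th}$ is of class $C^{l-1}$, so $T\omega_h$ is $C^{l-1}$; combined with the $C^1$ step this gives that $\omega_h$ is $C^l$, closing the induction.

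The main obstacle is the $C^1$ remainder estimate in the inductive step. Controlling $\|\,\tilde\omega_h(\xi+\tau)-\tilde\omega_h(\xi)-D_2F(\cdot,\xi)\cdot\tau\,\|_{C^k}=o(\|\tau\|_{C^k})$ requires differentiating the composite up to order $k$ in the base variable via the chain rule, which brings in $D_2F$ and hence a further $x$-derivative; it is exactly the demand that $D_2F$ be of class $C^k$ in $x$ that forces $F$, and therefore $h$, to be one degree smoother at each stage. This single lost derivative per differentiation is what pins down $C^{k+l}$ as the sharp hypothesis needed to iterate $l$ times.
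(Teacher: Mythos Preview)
The paper does not supply its own proof of this statement: the $\omega$-theorem is quoted as a classical result of \cite{abraham1963lectures} and is used only as background to explain why left translation on $\mathfrak{D}^s$ loses derivatives. There is therefore no paper proof to compare against.

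That said, your outline is the standard argument one finds in the global analysis literature (Palais, Abraham--Robbin, Ebin--Marsden): pass to exponential charts to reduce $\omega_h$ to the fibrewise omega lemma for a $C^{k+l}$ bundle map, establish continuity from compactness of $M$, and induct on $l$ via the structural identity $T\omega_h=\omega_{Th}$ together with the drop in regularity $Th\in C^{k+(l-1)}$. The one place I would tighten is the bridge from ``$T\omega_h$ is $C^{l-1}$'' to ``$\omega_h$ is $C^l$'': strictly one needs that the Fr\'echet derivative $D\omega_h$, viewed as a map into the Banach space of bounded linear operators, is $C^{l-1}$. This does follow from your identity, but it requires observing that under the identification $TC^k(M,N)\cong C^k(M,TN)$ the map $\omega_{Th}$ encodes both $\omega_h$ and $D\omega_h$, and that the evaluation/projection maps involved are smooth; making this explicit is what closes the induction cleanly.
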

For $s>n/2+1$, the space of Sobolev class diffeomorphisms is defined as the set
\begin{equation}
    \mathfrak{D}^s(M) = H^s(M,M)\cap \{ \eta\in C^1(M,M)\,|\, \eta \text{ is bijective and } \eta^{-1}\in C^1(M,M)\}.
\end{equation}
When there is no danger of confusion, we will write $\mathfrak{D}^s$ instead of $\mathfrak{D}^s(M)$.  Analogous to the $\alpha-\omega$-theorems for $C^k$-mappings, \cite{ebin1967space} showed that the right action 
\begin{equation}
    R_\eta:\mathfrak{D}^s\to\mathfrak{D}^s, \qquad f\mapsto f\circ \eta,
\end{equation}
is a mapping of class $C^\infty$ for any $\eta\in\mathfrak{D}^s$, whereas the left action, for $\eta\in \mathfrak{D}^{s+l}$
\begin{equation}
    L_\eta:\mathfrak{D}^s\to\mathfrak{D}^s, \qquad f\mapsto \eta\circ f,
\end{equation}
is merely a mapping of class $C^l$. Thus, if one restricts to right shifts, it is not necessary to count derivatives and one can perform differential topology and geometry as usual. So we consider the manifold $M$ to be acted upon by a group of Sobolev class diffeomorphisms $\mathfrak{D}^s = \{\eta\in H^s(M,M)|\,\eta\text{ is bijective and } \eta^{-1}\in H^s(M,M)\}$. The space of Sobolev class diffeomorphisms is both a Hilbert manifold and a topological group provided that $s>n/2+1$. The Hilbert manifold structure is much better for analysis than the Fr\'echet manifold structure, since for Hilbert manifolds one has the existence of function inverses and the implicit function theorem, as well as the existence of a general solution theorem for ordinary differential equations. This additional structure also implies that one can construct the tangent space of $\mathfrak{D}^s$ in the usual way and study geodesics. At this point, one can introduce dynamics. 

The space $\mathfrak{D}^s$ is the configuration space for continuum mechanics and each $\eta\in\mathfrak{D}^s$ is called a configuration. A fluid trajectory starting from $x_0\in M$ at time $t=0$ is given by $x(t)=\eta_t(x_0)=\eta(x_0,t)$, with $\mathfrak{D}^s\ni \eta:M\times\mathbb{R}^+\to M$ a continuous one-parameter subgroup of $\mathfrak{D}^s$. In the deterministic case, computing the time derivative of this one-parameter subgroup yields the \emph{reconstruction equation}, given by
\begin{equation}
\frac{\partial}{\partial t}\eta_t(x_0) = u(\eta_t(x_0),t),
\label{eq:reconstructiondeterministic}
\end{equation}
where $u_t(\,\cdot\,)=u(\,\cdot\,,t)\in \mathfrak{X}^s(M)$ is a time dependent vector field with flow $\eta_t(\,\cdot\,)=\eta(\,\cdot\,,t)$. The initial data is given by $\eta(x_0,0)=x_0$. One can view the space of vector fields $\mathfrak{X}^s(M)$ as the space Sobolev class sections of the tangent bundle, i.e., $\mathfrak{X}^s(M)=\Gamma^s(TM)$. However, this hides the important fact that the vector fields are a special case of a more general construction, that of bundle-valued differential forms. The dual of a vector field is much more intuitive in the framework of bundle-valued differential forms. Bundle-valued differential forms are constructed in the following manner. Let $M$ be a smooth manifold and let $E\to M$ be a smooth vector bundle over $M$. Recall that a scalar-valued differential $k$-form is given by $\Omega^k(M) = \Gamma(\bigwedge^k T^*M)$, where the sections are considered smooth. A Sobolev-class $E$-valued differential $k$-form on $M$ is a multilinear map of class $H^s$ that associates to each $x\in M$ a element of $\bigwedge^k T_x^*M \otimes E_x$. This is a differential $k$-form on $M$ with values in $E$. The space of Sobolev-class $E$-valued differential $k$-forms is given by 
\begin{equation}
\Omega^{k,s}(M,E) = \Gamma^s(E\otimes\textstyle\bigwedge^k T^*M).
\end{equation}
Let $E^*$ denote the dual of $E$, induced in the canonical way, then the dual of the space of Sobolev-class $E$-valued differential $k$-forms is the space of Sobolev-class $E^*$-valued differential $n-k$-forms
\begin{equation}
\Omega^{k,s}(M,E)^* = \Gamma^s(E^*\otimes\textstyle\bigwedge^{n-k} T^*M)= \Omega^{n-k,s}(M,E^*).
\end{equation}
The duality pairing between a Sobolev-class $E$-valued $k$-form $\zeta$ and an $E^*$-valued $n-k$-form $\mathcal{X}$ is given by
\begin{equation}
\langle \mathcal{X},\zeta\rangle = \int_M\zeta\,\dot{\wedge}\,\mathcal{X},
\end{equation}
where the wedge-dot is the operator
\begin{equation}
\dot{\wedge}:\Omega^{k,s}(M,E)\times\Omega^{l,s}(M,E^*)\to \Omega^{k+l}(M).
\end{equation}
Some words of caution are necessary here. While the pullback distributes over the usual wedge product, it does not for the wedge-dot product. The pullback does distribute over the form part of the wedge-dot product. Another important remark is that the De Rham complex of bundle-valued differential forms is only exact if the underlying manifold is flat. The space of Sobolev-class vector fields is then $\mathfrak{X}^s(M) = \Omega^{0,s}(M;TM)=\Gamma^s(TM)$, the space of vector-valued scalar fields. This means that the dual of the vector fields is naturally identified with covector-valued densities, i.e. $\mathfrak{X}^s(M)^* = \Omega^{n,s}(M;T^*M)$, as we will now show.

The space $\mathfrak{D}^s$ inherits a \emph{weak Riemannian structure} from the underlying manifold $M$ in a natural way. For $\eta\in \mathfrak{D}^s$ and $U_\eta,V_\eta\in T_\eta\mathfrak{D}^s$, one can define the following bilinear form
\begin{equation}
(U_\eta,V_\eta)_\eta =\int_M \langle U_\eta(x),V_\eta(x)\rangle_{\eta(x)}\,d\mu(x),
\label{eq:weakriemannian}
\end{equation}
where $\mu$ is the volume form on $M$ induced by the metric. Note the subtle difference with the bilinear form in \eqref{eq:l2inner}. This bilinear form is neither left- nor right-invariant. To combat this issue, one defines $U=dR_\eta^{-1} U_\eta$ and $V=dR_\eta^{-1} V_\eta$. Then $U,V\in T_e\mathfrak{D}^s$ and pulling back the volume form by $\eta^{-1}$ gives
\begin{equation}\label{eq:weakriemannian2}
(U_\eta,V_\eta)_\eta = \int_M \langle U(x),V(x)\rangle_x (\eta^{-1})^*(d\mu).
\end{equation}
The Riemannian structures \eqref{eq:weakriemannian},\eqref{eq:weakriemannian2} are weak because the topology is of type $L^2$, which is strictly weaker than the $H^s$ topology. The Riemannian structure \eqref{eq:weakriemannian2} is a linear functional on the Hilbert space $T_\eta\mathfrak{D}^s$ and canonically induces the dual space $T_\eta^*\mathfrak{D}^s$. The pairing between $U\in T_\eta\mathfrak{D}^s$ and $\alpha\in T_\eta^*\mathfrak{D}^s$ is given by 
\begin{equation}
\langle \alpha,U \rangle = \int_M \alpha(x)\cdot U(x).
\label{eq:dualitypairing}
\end{equation}
Hence the metric on $M$ and the volume form $d\mu(x)$ can be used to construct the isomorphism between $T\mathfrak{D}^s$ and $T^*\mathfrak{D}^s$ as $U(x)\mapsto\alpha(x)=U^\flat(x)d\mu(x)$, where $\flat:TM\to T^*M$ is one of the musical isomorphisms that are induced by the metric on $M$. This shows that the dual $\alpha$ of a vector field $U$ is a covector-valued density, with $U^\flat$ being the covector part and $d\mu$ the density part. The group $\mathfrak{D}^s$ is not a Lie group; since right multiplication is smooth, but left multiplication is only continuous. Hence $\mathfrak{D}^s$ is a topological group with a weak Riemannian structure. In general, these properties are not sufficient to guarantee the existence of an exponential map. However, \cite{ebin1970groups} showed that an exponential map exists in many important cases. In particular, they showed that the geodesic spray associated to \eqref{eq:weakriemannian} (with and without forcing) is smooth.\footnote{The \emph{geodesic spray} is the vector field whose integral curves are the geodesics.} The smoothness of the geodesic spray persists even though $H^s$ diffeomorphisms are considered rather than smooth diffeomorphisms. Combined with the existence of an exponential map, the smoothness property implies a regular interpretation of the Euler-Poincar\'e equations on $\mathfrak{D}^s$, provided that one uses right translations and right representations of the group on itself and its Lie algebra, as shown in \cite{holm1998euler}. However, due to the presence of the mass form $\rho\, d\mu = (\eta^{-1})^*d\mu = \eta_*(d\mu)$, the bilinear form \eqref{eq:weakriemannian2}, which will serve to define the kinetic energy, is not right-invariant under the action of the entire $H^s$ diffeomorphism group, although there is right-invariance under the action of the isotropy subgroup $\mathfrak{D}^s_\mu = \{\eta\in\mathfrak{D}^s | \, \eta_*(d\mu) = d\mu\}$. This subgroup is a proper subgroup, because it is smaller than $\mathfrak{D}^s$ itself. Thus, one speaks of \emph{symmetry breaking} due to so-called advected quantities.

\begin{definition}[Advected quantity]
A fluid variable is said to be \emph{advected}, if it keeps its value along Lagrangian particle trajectories. Advected quantities are sometimes called \emph{tracers}, because the evolution histories of scalar advected quantities with different initial values (labels) trace out the Lagrangian particle trajectories of each label, or initial value, via the \emph{push-forward} of the full diffeomorphism group, i.e., $a_t=\eta_{t\,*}a_0= (\eta_t^{-1})^*(a_0)$, where $\eta_t$ is a time-dependent curve on the manifold of diffeomorphisms that represents the fluid flow.
\end{definition}\smallskip

\begin{remark}[Advected quantities as order parameters]
When several advected quantities are involved, the space $V^*$ is the direct sum of several vector spaces, where each summand space hosts a different advected quantity. In general, each additional advected quantity decreases the dimension of the isotropy subgroup. For example, consider an ideal deterministic fluid with a buoyancy variable $b$, then the Lagrangian corresponding to the model will depend on the mass form $\rho\,d\mu = (\eta^{-1})^*(d\mu)$ and $b$ in a parametric manner. This Lagrangian will be right-invariant under the action of the isotropy subgroup $\mathfrak{D}^s_{\mu,b} = \{\eta\in\mathfrak{D}^s|\, \eta_*(d\mu)=d\mu \text{ and } \eta_*b=b\}$. Hence, advected quantities are \emph{order parameters} and each additional order parameter breaks more symmetry. For the sake of notation, one usually writes $\mathfrak{D}^s_{a_0}$ for the isotropy subgroup, no matter how many advected quantities there are. One then uses $a$ to represent all advected quantities and $a_0$ to denote the initial value of the advected quantities. 
\end{remark}

\subsection{Semidirect product group adjoint \& coadjoint actions}\label{sec:adjcoadj}
\subsection*{Key points}
\begin{center}
    $\bullet\,$ Semidirect product groups $\qquad$ 
    $\bullet\,$ Lie algebra and its dual $\qquad$
    $\bullet\,$ Adjoint and coadjoint actions \\
    $\bullet\,$ Lie-Poisson bracket $\qquad$
    $\bullet\,$ Diamond operator
\end{center}
The semidirect product group action is constructed in the following way. The representation of $\mathfrak{D}^s$ on a vector space $V$ is by push-forward, which is a left representation, as shown by \cite{marsden1984semidirect}. The representation of the group on itself and on its Lie algebra is a right representation. In terms of analysis, this means that all representations are smooth and no derivatives need to be counted. The group action of the semidirect product group is given by
\begin{equation}
\begin{aligned}
\bullet:(\mathfrak{D}^s\times V)\times(\mathfrak{D}^s\times V)\to(\mathfrak{D}^s\times V)\\
\quad (\eta_1,v_1)\bullet(\eta_2,v_2):= (\eta_1\circ \eta_2,v_2+(\eta_2)_*v_1)
\end{aligned}
\label{eq:semidirectproduct}
\end{equation}
with $\eta_1,\eta_2\in\mathfrak{D}^s$ and $v_1,v_2\in V$. Since this is a right-action, it is natural to read this from right to left. The semidirect product group is often denoted as $\mathfrak{D}^s\circledS V = (\mathfrak{D}^s\times V,\bullet)$. In the group action above, $(\eta_2)_*v_1$ denotes the \emph{push-forward} of $v_1$ by $\eta_2$ and $\circ$ denotes composition. Note that the group affects both slots in \eqref{eq:semidirectproduct}, but the vector space only appears in the second slot. The identity element of the semidirect product group is $(e,0)$ where $e\in\mathfrak{D}^s$ is the identity diffeomorphism and $0\in V$ is the zero vector. An inverse element is given by
\begin{equation}
(\eta,v)^{-1} = (\eta^{-1},-(\eta^{-1})_*v) = (\eta^{-1}, -\eta^*v),
\end{equation}
where $\eta^*v$ denotes the pull-back of $v$ by $\eta$. To understand how reduction works for semidirect products, it is helpful to know how the group acts on its Lie algebra and on the dual of its Lie algebra. Duality will be defined with respect to the sum of the pairing \eqref{eq:dualitypairing} and the dual linear transformation on $V$. This results in the pairing 
\begin{equation}
\langle(m,a),(u,b)\rangle_{(\mathfrak{X}^s\times V)^*\times(\mathfrak{X}^s\times V)} = \langle m,u \rangle_{\mathfrak{X}^{s*}\times\mathfrak{X}^s} + \langle a,b \rangle_{V^*\times V}
\end{equation}
of $\mathfrak{X}^s\times V$ and its dual. Consider two at least $C^1$ one-parameter subgroups $(\eta_t,v_t),(\widetilde{\eta}_\epsilon,\widetilde{v}_\epsilon)\in \mathfrak{D}^s\times V$. Using these one-parameter subgroups, one can compute the inner automorphism, or adjoint action of the group on itself. This adjoint action is defined by conjugation
\begin{equation}
\begin{aligned}
{\rm AD}:(\mathfrak{D}^s\times V)\times(\mathfrak{D}^s &\times V)\to(\mathfrak{D}^s\times V),\\
{\rm AD}_{(\eta_t,v_t)}(\widetilde{\eta}_\epsilon,\widetilde{v}_\epsilon) &:= (\eta_t,v_t)\bullet (\widetilde{\eta}_\epsilon,\widetilde{v}_\epsilon)\bullet (\eta_t,v_t)^{-1}\\
&= \big(\eta_t\circ\widetilde{\eta}_\epsilon\circ \eta_t^{-1}, \eta_t^*(\widetilde{v}_\epsilon - v_t + \widetilde{\eta}_{\epsilon*}v_t)\big).
\end{aligned}
\label{eq:AD}
\end{equation}
To see how the group acts on its Lie algebra, one can compute the derivative with respect to $\epsilon$ and evaluate at $\epsilon=0$ in the adjoint action of the group on itself. Let $\mathfrak{X}^s\ni \widetilde{u}=\frac{d}{d\epsilon}|_{\epsilon=0}\widetilde{\eta}_\epsilon$ and $V\ni\widetilde{b}=\frac{d}{d\epsilon}|_{\epsilon=0}\widetilde{v}_\epsilon$. This choice for a vector field is guided by the deterministic reconstruction equation in \eqref{eq:reconstructiondeterministic}. For any tensor $S_\epsilon\in T_s^r(M)$ whose dependence on $\epsilon$ is at least $C^1$ it holds that
\begin{equation}
\frac{d}{d\epsilon}\widetilde{\eta}_{\epsilon*}S_\epsilon = \widetilde{\eta}_{\epsilon*}\left(\frac{d}{d\epsilon}S_\epsilon-\mathcal{L}_{\widetilde{u}} S_\epsilon\right).
\label{eq:liechainrule}
\end{equation}
Important here is that the Lie derivative does not commute with pull-backs and push-forwards that depend on parameters, see \cite{abraham1978foundations}. The adjoint action of the group on its Lie algebra can be computed as 
\begin{equation}
\begin{aligned}
{\rm Ad}:(\mathfrak{D}^s\times V)\times(\mathfrak{X}^s &\times V)\to (\mathfrak{X}^s\times V),\\
{\rm Ad}_{(\eta_t,v_t)}(\widetilde{u},\widetilde{b})&:= \frac{d}{d\epsilon}\Big|_{\epsilon=0}{\rm AD}_{(\eta_t,v_t)}(\widetilde{\eta}_\epsilon,\widetilde{v}_\epsilon)\\
&= (\eta_{t*}\widetilde{u},\eta^*_t\widetilde{b}-\eta_t^*\mathcal{L}_{\widetilde{u}}v_t).
\end{aligned}
\label{eq:Ad}
\end{equation}
By means of the pairing on $\mathfrak{X}^s\times V$, one can compute the dual action to the adjoint action \eqref{eq:Ad}. This is called the coadjoint action of the group on the dual of its Lie algebra and it is a representation only when one defines using the inverse of a group element. Let $(\widetilde{m},\widetilde{a})\in(\mathfrak{X}^s\times V)^*$, then the coadjoint action is given by
\begin{equation}
\begin{aligned}
{\rm Ad}^*:(\mathfrak{D}^s\times V)\times(\mathfrak{X}^s &\times V)^*\to(\mathfrak{X}^s\times V)^*,\\
\langle{\rm Ad}^*_{(\eta_t^{-1},-\eta_t^{-1}v_t)}(\widetilde{m},\widetilde{a}),(\widetilde{u},\widetilde{b})\rangle &:= \langle(\widetilde{m},\widetilde{a}),{\rm Ad}_{(\eta_t,v_t)}(\widetilde{u},\widetilde{b})\rangle,\\
{\rm Ad}^*_{(\eta_t^{-1},-\eta_t^{-1}v_t)}(\widetilde{m},\widetilde{a}) &= (\eta_t^*\widetilde{m}+v_t\diamond \eta_{t*}\widetilde{a},\eta_{t*}\widetilde{a}).
\end{aligned}
\label{eq:Ad*}
\end{equation}
\begin{definition}[The diamond operator]
The coadjoint action \eqref{eq:Ad*} features the \emph{diamond operator}, which is defined for $a\in V^*$, $u\in\mathfrak{X}^s$ and fixed $v\in V$ as
\begin{equation}
\langle v\diamond a, u\rangle_{\mathfrak{X}^{s*}\times\mathfrak{X}^s} := -\langle a,\mathcal{L}_u v\rangle_{V^*\times V}.
\end{equation}
Note that the diamond operator is the dual of the Lie derivative regarded as a map $\mathcal{L}_{(\,\cdot\,)}v:\mathfrak{X}^s\to V$, hence $v\diamond(\,\cdot\,):V^*\to\mathfrak{X}^{s*}$. The diamond operator shows how an element from the dual of the vector space acts on the dual of the Lie algebra. 
\end{definition}

When evaluated at $t=0$, the $t$-derivatives of ${\rm Ad}$ in \eqref{eq:Ad} and ${\rm Ad}^*$ in \eqref{eq:Ad*} define, respectively, the adjoint and coadjoint actions of the Lie algebra on itself and on its dual. Denote by $\mathfrak{X}^s\ni u = \frac{d}{dt}|_{t=0}\eta_t$ and $V\ni b=\frac{d}{dt}|_{t=0}v_t$. The adjoint action of the Lie algebra on itself is
\begin{equation}
\begin{aligned}
{\rm ad}:(\mathfrak{X}^s\times V)\times(\mathfrak{X}^s &\times V)\to (\mathfrak{X}^s\times V),\\
{\rm ad}_{(u,b)}(\widetilde{u},\widetilde{b})&:=\frac{d}{dt}\Big|_{t=0}{\rm Ad}_{(g_t,v_t)}(\widetilde{u},\widetilde{b}),\\
{\rm ad}_{(u,b)}(\widetilde{u},\widetilde{b})&=(-\mathcal{L}_u\widetilde{u},\mathcal{L}_u\widetilde{b}-\mathcal{L}_{\widetilde{u}}b)\\
&= (-[u,\widetilde{u}],\mathcal{L}_u\widetilde{b}-\mathcal{L}_{\widetilde{u}}b),
\end{aligned}
\label{eq:ad}
\end{equation}
where the bracket $[\,\cdot\,,\,\cdot\,]$ in \eqref{eq:ad} is the commutator of vector fields. The minus sign is due to fact that group acts on itself from the right. The coadjoint action of the Lie algebra on its dual can be obtained by computing the dual to \eqref{eq:ad} or by taking the derivative with respect to $t$ and evaluate at $t=0$ in \eqref{eq:Ad*}. Either way, one arrives at
\begin{equation}
\begin{aligned}
{\rm ad}^*:(\mathfrak{X}^s\times V)\times(\mathfrak{X}^s &\times V)^*\to(\mathfrak{X}^s\times V)^*,\\
\langle{\rm ad}^*_{(u,b)}(\widetilde{m},\widetilde{a}),(\widetilde{u},\widetilde{b})\rangle &:= \langle(\widetilde{m},\widetilde{a}),{\rm ad}_{(u,b)}(\widetilde{u},\widetilde{b})\rangle,\\
{\rm ad}^*_{(u,b)}(\widetilde{m},\widetilde{a}) &= (\mathcal{L}_u\widetilde{m} + b\diamond\widetilde{a},-\mathcal{L}_u\widetilde{a}),
\end{aligned}
\label{eq:ad*}
\end{equation}
in which \eqref{eq:ad} implies the last line in \eqref{eq:ad*}. Alternatively, one can also obtain \eqref{eq:ad*} by taking the derivative with respect to $t$ in \eqref{eq:Ad*} and evaluate at $t=0$.

In summary, the adjoint and coadjoint actions and operators for semidirect product groups of diffeomorphisms and spaces of advected quantities are given by
\begin{table}[H]
\centering
\begin{tabular}{l|ll}
${\rm AD}:(\mathfrak{D}^s\times V)\times(\mathfrak{D}^s\times V)\to(\mathfrak{D}^s\times V)$ & ${\rm AD}_{(\eta,v)}:$ & $(\widetilde{\eta},\widetilde{v}) \mapsto (\eta\circ\widetilde{\eta}\circ\eta^{-1},\, \eta^*(\widetilde{v}-v+\widetilde{\eta}_*v)$ \\
${\rm Ad}:(\mathfrak{D}^s\times V)\times (\mathfrak{X}^s\times V)\to(\mathfrak{X}^s\times V)$ & ${\rm Ad}_{(\eta,v)}:$ & $(\widetilde{u},\widetilde{b})\mapsto (\eta_*\widetilde{u},\, \eta^*\widetilde{b}-\eta^*\mathcal{L}_{\widetilde{u}}v)$ \\
${\rm Ad}^*:(\mathfrak{D}^s\times V)\times (\mathfrak{X}^s\times V)^*\to(\mathfrak{X}^s\times V)^*$ & ${\rm Ad}^*_{(\eta,v)^{-1}}:$ & $(\widetilde{m},\widetilde{a})\mapsto (\eta^*\widetilde{m} + v\diamond\eta_*\widetilde{a},\, \eta_*\widetilde{a})$ \\
${\rm ad}:(\mathfrak{X}^s\times V)\times(\mathfrak{X}^s\times V)\to(\mathfrak{X}^s\times V)$ & ${\rm ad}_{(u,b)}:$ & $(\widetilde{u},\widetilde{b})\mapsto (-[u,\widetilde{u}],\, \mathcal{L}_u\widetilde{b}-\mathcal{L}_{\widetilde{u}}b)$\\
${\rm ad}^*:(\mathfrak{X}^s\times V)\times(\mathfrak{X}^s\times V)^*\to(\mathfrak{X}^s\times V)^*$ & ${\rm ad}^*_{(u,b)}:$ & $(\widetilde{m},\widetilde{a}) \mapsto (\mathcal{L}_u\widetilde{m} + b\diamond \widetilde{a},\,-\mathcal{L}_u\widetilde{a})$
\end{tabular}
\caption{The adjoint and coadjoint actions of the semidirect product group made of the diffeomorphism group and a vector bundle and the corresponding semidirect product Lie algebra.}
\label{tab:semidirect}
\end{table}
\smallskip

\begin{remark}[Coadjoint action and the diamond operator]
The coadjoint action is an important operator in geometric mechanics and representation theory. It was shown by \cite{kirillov1962unitary} and in further work by \cite{kostant1970quantization} and \cite{souriau1970structure} that the coadjoint orbits of a Lie group $G$ have the structure of symplectic manifolds and are connected with Hamiltonian mechanics. See \cite{kirillov1999merits} for a review. The computations of the adjoint and coadjoint actions for the semidirect product group is valuable for fluid mechanics, as they introduce the two fundamental operators that appear in the equations of motion. The Lie derivative is responsible for transport of tensors along vector fields and its dual action given by the diamond operator encodes the symmetry breaking. In particular, the diamond operator introduces the effect of symmetry breaking into the Euler-Poincar\'e equations of motion.
\end{remark}

\subsection{Geometric mechanics with diffeomorphisms}
\subsection*{Key points}
\begin{center}
    $\bullet\,$ Lagrangians and Hamiltonians $\qquad$
    $\bullet\,$ Legendre transform $\qquad$ \\
    $\bullet\,$ Reduction by symmetry $\qquad$
    $\bullet\,$ Reconstruction equation 
\end{center}
With the adjoint and coadjoint actions defined, one can derive continuum mechanics equations with advected quantities by using symmetry reduction. Euler-Poincar\'e reduction for a semidirect product group $\mathfrak{D}^s\times V$ as developed in \cite{holm1998euler} is sketched below in figure \ref{fig:cubesdp}. 
\begin{figure}[H]
\small
\centering
\begin{tikzcd}[row sep=3em, column sep=small]
& 
L:T\mathfrak{D}^s\times V^*\to\mathbb{R} \arrow[dl] \arrow[rr,  "\text{Legendre transform}", leftrightarrow] \arrow[dd] 
& 
& 
H:T^*(\mathfrak{D}^s\times V)\to\mathbb{R} \arrow[dl] \arrow[dd]
\\
\text{Euler-Lagrange eqns} \arrow[rr, crossing over, Leftrightarrow] 
& 
& \text{Hamilton's eqns}
\\
&
\ell:\mathfrak{X}^s\times V^*\to\mathbb{R} \arrow[dl] \arrow[rr, "\text{Legendre \hspace{0.25cm} transform}", leftrightarrow] 
& 
& 
\hslash:(\mathfrak{X}^s\times V)^*\to\mathbb{R} \arrow[dl] 
\\
\text{Euler-Poincar\'e eqns} \arrow[rr, Leftrightarrow] \arrow[from=uu, crossing over]
& 
& 
\text{Lie-Poisson eqns} \arrow[from=uu, crossing over]
\end{tikzcd}
\caption{The cube of continuum mechanics in the semidirect product group setting. Reduction is indicated by the arrows pointing down.}
\label{fig:cubesdp}
\end{figure}
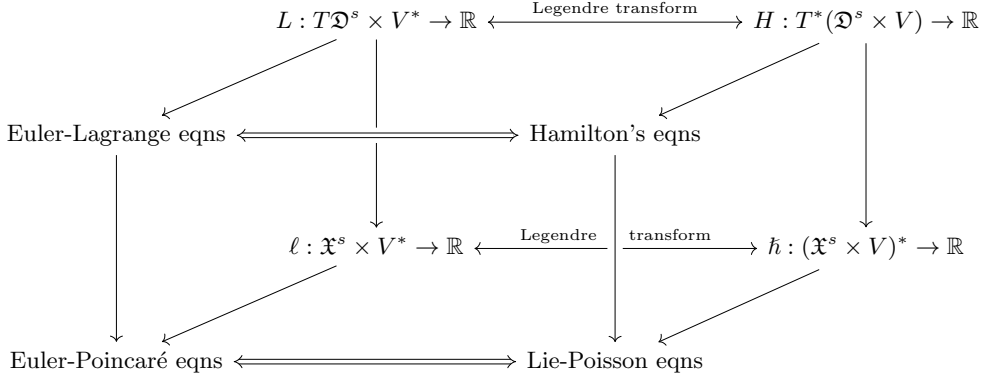
 As shown by comparison of  figure \ref{fig:cubesdp} with figure \ref{fig:cube}, several new features arise in semidirect product Lie group reduction which differ from Euler-Poincar\'e reduction by symmetry when the configuration space itself is a Lie group. These differences can be conveniently explained by introducing the physical concept of an order parameter. As discussed earlier, the order parameters in continuum mechanics are the elements of $V^*$ which are advected by the action of the diffeomorphism group $\mathfrak{D}^s$. The advection is defined simply as the semidirect product action on the elements of $V^*$. The introduction of each additional advected state variable (or, order parameter) into the physical problem shrinks the symmetry group $\mathfrak{D}^s$. The remaining symmetry of the Lagrangian in Hamilton's principle is the isotropy subgroup $\mathfrak{D}^s_{a_0}$ of the initial conditions, $a_0$, for the entire set of advected quantities, $a$. The action of the diffeomorphism group $\mathfrak{D}^s$ on these initial conditions then describes their advection as the action of $\mathfrak{D}^s$ on its coset space $\mathfrak{D}^s\setminus\mathfrak{D}^s_{a_0}=V^*$. Once the inital values of the order parameters, $a_0$,  have been set, one must still define a Legendre transform from the Lagrangian formulation into the Hamiltonian formulation and vice versa. The Legendre transform in the setting of semidirect products is a partial Legendre transform, since it transforms between $T\mathfrak{D}^s$ and $T^*\mathfrak{D}^s$ or $T\mathfrak{D}^s\setminus\mathfrak{D}^s_{a_0} \simeq \mathfrak{X}^s$ and $T^*\mathfrak{D}^s\setminus\mathfrak{D}^s_{a_0} \simeq\mathfrak{X}^{s*}$ only after having fixed the value $a_0$ of the order parameters, which live in $V^*$. This coset reduction is what figure \ref{fig:cubesdp} shows. The remaining right-invariance of a functional under the action of the isotropy subgroup is called its \emph{particle relabelling symmetry}.

Our exploration continues on the Lagrangian side in figure \ref{fig:cubesdp}. Consider a Lagrangian $L:T\mathfrak{D}^s\times V^*\to\mathbb{R}$. By fixing the value of $a_0\in V^*$, one can construct $L_{a_0}:T\mathfrak{D}^s\to\mathbb{R}$. If this Lagrangian is right-invariant under the action of the isotropy subgroup $\mathfrak{D}_{a_0}^s$, then one can construct 
\begin{equation}
\begin{aligned}
L\left(\frac{d}{dt}\eta\circ \eta^{-1},e,a_0\right) &= L_{a_0}\left(\frac{d}{dt}\eta\circ \eta^{-1},e\right)\\
&= \ell_{a_0}\left(\frac{d}{dt}\eta\circ \eta^{-1}\right) = \ell\left(\frac{d}{dt}\eta\circ \eta^{-1}, \eta_*a_0\right).
\end{aligned}
\label{eq:lagrangians}
\end{equation}
Here $\circ$ means composition of functions. The same procedure applies to the Hamiltonian. Since the coadjoint action is known, it is straightforward to formulate the Lie-Poisson equations. The details of Hamiltonian semidirect product reduction and also more information on the Lagrangian semidirect product reduction can be found in \cite{holm1998euler}. 

The coadjoint action of the Lie algebra on its dual is also required for the Lagrangian semidirect product reduction. One can use the deterministic reconstruction equation to see that the argument of the Lagrangians in \eqref{eq:lagrangians} is
\begin{equation}
\frac{d}{dt}\eta\circ \eta^{-1} = u.
\end{equation}
Using this information, one can integrate the Lagrangian in time to construct the action functional. By requiring the variational derivative of the action functional to vanish, one can compute the equations of motion. However, due to the removal of symmetries, the variations are no longer free. In the next section, we replace the deterministic reconstruction equation by a semimartingale and formulate the stochastic Euler-Poincar\'e theorem.

\section{Stochastic geometric mechanics with diffeomorphisms}
\subsection*{Key points}
\begin{center}
    $\bullet\,$ Semimartingale $\qquad$
    $\bullet\,$ Stratonovich integral $\qquad$
    $\bullet\,$ Stochastic Lie-chain rule
\end{center}
Geometric mechanics can be made stochastic in two distinct ways while preserving structure. The first option is called stochastic advection by Lie transport (SALT) and was introduced by \cite{holm2015variational}. SALT leaves the geometric structure invariant, but at the cost of losing conservation of energy. It corresponds to replacing the deterministic reconstruction equation in \eqref{eq:reconstructiondeterministic} by the semimartingale
\begin{equation}
{\sf d}\eta_t(x_0) = u(\eta_t(x_0),t)dt + \sum_{i=1}^M \xi_i(\eta_t(x_0))\circ dW_t^i,
\label{eq:reconstructionstochastic}
\end{equation}
where here the symbol $\circ$ means that the stochastic integral is taken in the Stratonovich sense. Note that in all other instances $\circ$ means composition of functions. The initial data is given by $\eta(x_0,0)=x_0$. The $W_t^i$ are independent, identically distributed Brownian motions, defined with respect to the standard stochastic basis $(\Omega,\mathcal{F},(\mathcal{F}_t)_{t\geq 0},\mathbb{P})$, see \cite{karatzas1998brownian}. Such a noise was shown to arise from a multi-time homogenisation argument in \cite{cotter2017stochastic}. This replacement \eqref{eq:reconstructionstochastic} need not be by a semimartingale, one can introduce geometric rough paths by the same approach, which was done in \cite{crisan2022variational}. The $\xi_i(\,\cdot\,)\in\mathfrak{X}^s$ are called data vector fields and are prescribed. These data vector fields represent the effects of unresolved degrees of freedom on the resolved scales of motion and account for unrepresented processes. 

The data vector fields $\xi_i$ can be determined by applying empirical orthogonal function analysis to appropriate numerical and/or observational data. For instance, for an application to the two dimensional Euler equations for an ideal fluid, see \cite{cotter2019numerically} and \cite{ephrati2023data}. An application of this framework to a two-layer quasi-geostrophic model can be found in \cite{cotter2018modelling}. Stochastic models enable the use of a variety of methods in data assimilation, which are discussed in \cite{cotter2019particle}. It is not difficult to make sense of \eqref{eq:reconstructiondeterministic}, but understanding \eqref{eq:reconstructionstochastic} is more complicated and requires stochastic analysis. Of particular importance is a stochastic chain rule, which is shown to exist in \cite{de2020implications}. This stochastic chain rule is called the \emph{Kunita-It\^o-Wentzell (KIW) formula} and helps interpret the semimartingale in \eqref{eq:reconstructionstochastic}. The KIW formula will also be used later to prove the stochastic Kelvin circulation theorem. 

In deriving equations in continuum mechanics, one needs to keep track of the mass form as well. As discussed in the previous section, an appropriate mathematical setting for this is an \emph{outer semidirect product group}. This means that one constructs a new group from two given groups with a particular type of group operation. For continuum mechanics, the ingredients are $\mathfrak{D}^s$ and $V^*$, where $V^*$ is a vector space of tensor fields. The reason for starting with $V^*$ rather than just $V$ is historical and will become clear when we discuss the extension of the diagram in Figure \ref{fig:cube}. The vector space $V^*$ is the space of \emph{advected quantities} and it will always contain at least the mass form $\rho\,d\mu=(\eta^{-1})^*(d\mu)$.  \smallskip

\subsection{Stochastic Euler-Poincar\'e theorem}
\subsection*{Key points}
\begin{center}
    $\bullet\,$ Variational derivative $\qquad$
    $\bullet\,$ Symmetry-reduced variations $\qquad$
    $\bullet\,$ Stochastic Euler-Poincar\'e theorem $\qquad$ \\
    $\bullet\,$ Stochastic Kelvin-Noether theorem $\qquad$
    $\bullet\,$ Stochastic Euler-Boussinesq equations
\end{center}
In the situation where noise is present, that is, when the reconstruction equation is  \eqref{eq:reconstructionstochastic}, the Euler-Poincar\'e variations become stochastic. Consider $\eta:\mathbb{R}^2\to\mathfrak{D}^s$ with $\eta_{t,\epsilon}=\eta(t,\epsilon)$ to be a two parameter subgroup with smooth dependence on $\epsilon$, but only continuous dependence on $t$. Let us denote 
\[
{\sf d}\chi_{t,\epsilon}(X) = ({\sf d}\eta_{t,\epsilon}\circ \eta_{t,\epsilon})(X) = u_{t,\epsilon}(X)dt + \sum_{i=1}^N \xi_i(X)\circ dW_t^i
\]
 and  
 \[
 v_{t,\epsilon}(X) = (\frac{\partial}{\partial \epsilon}\eta_{t,\epsilon}\circ \eta_{t,\epsilon})(X)\,.
 \] 
When a $\circ$ symbol is followed by $dW_t$ it means Stratonovich integration and in every other context the $\circ$ symbol is used to denote composition. Note that the data vector fields $\xi_i$ are prescribed and hence will not have a dependence on $\epsilon$. 
 
In order to compute with these stochastically parametrised subgroups and their associated vector fields, one needs a stochastic Lie chain rule. The Kunita-It\^o-Wentzell (KIW) formula is the stochastic generalisation of the Lie chain rule \eqref{eq:liechainrule}. A proof of the KIW formula is given  in \cite{de2020implications} for differential $k$-forms and vector fields. The proof includes the technical details on regularity that will be omitted here. In the KIW formula, the $k$-form is allowed to be a semimartingale itself. Let $K$ be a continuous adapted semimartingale that takes values in the $k$-forms and satisfies
\begin{equation}
K_t = K_0 + \int_0^t G_s ds + \sum_{i=1}^N\int_0^t H_{i\,s}\circ dB_s^i,
\label{eq:kformsemimartingale}
\end{equation}
where the $B_t^i$ are independent, identically distributed Brownian motions. The drift of the semimartingale $K$ is determined by $G$ and the diffusion by $H_i$, both of which are $k$-form valued continuous adapted semimartingales with suitable regularity. Let $\eta_t$ satisfy \eqref{eq:reconstructionstochastic}, then \cite{de2020implications} shows that the following holds
\begin{equation}
{\sf d}(\eta_t^*K_t) = \eta_t^*\big({\sf d}K_t + \mathcal{L}_{u_t} K_t\,dt + \mathcal{L}_{\xi_i}K_t \circ dW_t^i\big).
\label{eq:kiwformula}
\end{equation}
Equation \eqref{eq:kformsemimartingale} helps to interpret the ${\sf d}K_t$ term in the KIW formula \eqref{eq:kiwformula}. This formula will be particularly useful in computing the variations of the variables in the Lagrangian. To compute these variations, one needs the variational derivative. 

\paragraph{The variational derivative.} The variational derivative of a functional $F:\mathcal{B}\to\mathbb{R}$, where $\mathcal{B}$ is a Banach space, is denoted $\delta F/\delta \rho$ with $\rho\in\mathcal{B}$. The variational derivative can be defined by the first variation of the functional
\begin{equation}
\delta F[\rho]:= \frac{d}{d\epsilon}\Big|_{\epsilon=0} F[\rho+\epsilon \delta\rho] = \int \frac{\delta F}{\delta \rho}(x)\delta\rho(x)\,dx = \left\langle\frac{\delta F}{\delta \rho},\delta \rho\right\rangle.
\end{equation}
In the definition above, $\epsilon\in\mathbb{R}$ is a parameter, $\delta\rho\in\mathcal{B}$ is an arbitrary function and the first variation can be understood as a Fr\'echet derivative. A precise and rigorous definition can be found in \cite{gelfand2000calculus}. With the definition of the functional derivative in place, the following lemma can be formulated.
\medskip

\begin{lemma}
With the notation as above, the variations of $u$ and any advected quantity $a$ are given by 
\begin{equation}
\delta u(t) = {\sf d}v(t) + [{\sf d}\chi_t,v(t)],\quad \delta a(t) = -\mathcal{L}_{v(t)}a(t),
\label{def:delta-var}
\end{equation}
where $v(t)\in\mathfrak{X}^s$ is arbitrary.
\end{lemma}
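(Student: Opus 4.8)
The plan is to obtain both formulae from one structural fact: the right-trivialised differentials in $t$ and in $\epsilon$ commute when applied to the flow $\eta_{t,\epsilon}$, together with the Stratonovich chain rule furnished by the KIW formula \eqref{eq:kiwformula}. Throughout I read the compositions in the right-trivialised sense, matching the reconstruction equation \eqref{eq:reconstructionstochastic}: that is, ${\sf d}\chi_{t,\epsilon}={\sf d}\eta_{t,\epsilon}\circ\eta_{t,\epsilon}^{-1}$ and $v_{t,\epsilon}=\partial_\epsilon\eta_{t,\epsilon}\circ\eta_{t,\epsilon}^{-1}$, equivalently ${\sf d}\eta_{t,\epsilon}={\sf d}\chi_{t,\epsilon}\circ\eta_{t,\epsilon}$ and $\partial_\epsilon\eta_{t,\epsilon}=v_{t,\epsilon}\circ\eta_{t,\epsilon}$ pointwise on $M$. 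Since the prescribed data fields $\xi_i$ carry no $\epsilon$-dependence, the object actually being varied is the full reconstruction differential, $\partial_\epsilon({\sf d}\chi_{t,\epsilon})=(\delta u)\,dt$, and it is this that the first identity computes.

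For the velocity variation I would exploit that $\epsilon$ is a smooth deterministic parameter while the $t$-dependence is a semimartingale, so the mixed operations commute on the flow, $\partial_\epsilon({\sf d}\eta_{t,\epsilon})={\sf d}(\partial_\epsilon\eta_{t,\epsilon})$. Expanding each side by the appropriate chain rule — ordinary in $\epsilon$, Stratonovich in $t$ (the latter licensed by \eqref{eq:kiwformula}) — the left-hand side becomes $[\partial_\epsilon({\sf d}\chi)+\nabla_v({\sf d}\chi)]\circ\eta$ and the right-hand side becomes $[{\sf d}v+\nabla_{{\sf d}\chi}v]\circ\eta$, where $\nabla_w(\,\cdot\,)$ denotes the spatial directional derivative along $w$. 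Composing with $\eta^{-1}$ and substituting $\partial_\epsilon({\sf d}\chi)=(\delta u)\,dt$, the terms reorganise into $(\delta u)\,dt={\sf d}v+\big(\nabla_{{\sf d}\chi}v-\nabla_v({\sf d}\chi)\big)={\sf d}v+[{\sf d}\chi_t,v]$, which is the first claim once $\nabla_{{\sf d}\chi}v-\nabla_v({\sf d}\chi)$ is recognised as the Jacobi-Lie bracket $[{\sf d}\chi_t,v]$ (equivalently $-{\rm ad}_{{\sf d}\chi_t}v$ in the convention of \eqref{eq:ad}). This is internally consistent: the identity forces the martingale part of ${\sf d}v$ to equal $-[\xi_i,v]\circ dW^i$, so the right-hand side is in fact pure drift.

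For the advected quantity I would differentiate the advection relation $a_{t,\epsilon}=(\eta_{t,\epsilon})_*a_0$ in $\epsilon$. Because $a_0$ is a fixed initial datum with $\partial_\epsilon a_0=0$ and $v=\partial_\epsilon\eta\circ\eta^{-1}$ is the right-trivialised $\epsilon$-generator, the Lie chain rule \eqref{eq:liechainrule} applied with $\epsilon$ as the parameter yields the infinitesimal transport relation $\delta a=\partial_\epsilon\big[(\eta_\epsilon)_*a_0\big]=-\mathcal{L}_v a$, exactly the $\epsilon$-analogue of the deterministic advection law $\partial_t a=-\mathcal{L}_u a$. This step is purely deterministic, since the variation acts only in $\epsilon$; the pushforward structure and naturality of the Lie derivative do all the work.

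The main obstacle I anticipate is not the algebra but the rigorous justification of the two ingredients: the commutation $\partial_\epsilon{\sf d}={\sf d}\partial_\epsilon$ on $\eta_{t,\epsilon}$, and the application of the Stratonovich chain rule to the composition of the vector-field-valued semimartingale $v_{t,\epsilon}$ with the stochastic flow $\eta_{t,\epsilon}$. Both rest on \eqref{eq:kiwformula}, whose validity hinges on the regularity hypotheses collected earlier: the index $s>n/2+1$ so that composition and evaluation are smooth (the composition theorem of Palais and the Ebin--Marsden smoothness results), adaptedness and continuity of the relevant semimartingales, and enough spatial smoothness of $\xi_i$ and $v$ for the Lie derivatives to remain in $\mathfrak{X}^s$. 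Care is also needed because, although Stratonovich calculus formally mimics ordinary calculus, the quantities differentiated are vector-field valued and the KIW correction terms $\mathcal{L}_{\xi_i}(\,\cdot\,)\circ dW^i$ must be tracked to confirm the claimed cancellation of the martingale parts.
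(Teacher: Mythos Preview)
Your proposal is correct and follows essentially the same route as the paper: both arguments hinge on the equality of the mixed derivative-differential $\partial_\epsilon({\sf d}\eta_{t,\epsilon})={\sf d}(\partial_\epsilon\eta_{t,\epsilon})$, expand each side via the (Stratonovich/ordinary) chain rule, and identify the difference of directional derivatives as the Jacobi--Lie bracket, while the advection variation comes from differentiating the push-forward in $\epsilon$. Your presentation is somewhat more geometric (right-trivialised differentials, naturality of $\mathcal{L}$) where the paper works in explicit coordinates $x_{t,\epsilon}=\eta_{t,\epsilon}(X)$, and your observation that the martingale part of ${\sf d}v$ must cancel against $[\xi_i,v]\circ dW^i$ is a useful consistency check the paper leaves implicit.
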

\begin{proof}
The proof of the variation of $a(t)$ is a direct application of the Kunita-It\^o-Wentzell formula to $a(t,\epsilon)=g_{t,\epsilon*}a_0$. Note that the data vector fields $\xi_i$ are prescribed and do not depend on $\epsilon$. Denote by $x_{t,\epsilon} = g_{t,\epsilon}(X)$. Then one has
\begin{equation}
{\sf d}\eta_{t,\epsilon}(X) = {\sf d}x_{t,\epsilon} = u_{t,\epsilon}(x_{t,\epsilon})\,dt + \sum_{i=1}^N \xi_i(x_{t,\epsilon})\circ dW_t^i =: {\sf d}\chi_{t,\epsilon}(x_{t,\epsilon}).
\label{eq:twoparameterstochu}
\end{equation}
The vector field associated to the $\epsilon$-dependence of the two parameter subgroup is given by
\begin{equation}
\frac{\partial}{\partial \epsilon}\eta_{t,\epsilon} = \frac{\partial}{\partial \epsilon}x_{t,\epsilon} = v_{t,\epsilon}(x_{t,\epsilon}).
\label{eq:twoparameterstochv}
\end{equation}
Computing the derivative with respect to $\epsilon$ of \eqref{eq:twoparameterstochu} gives
\begin{equation}
\begin{aligned}
\frac{\partial}{\partial \epsilon}{\sf d}x_{t,\epsilon} &= \frac{\partial}{\partial \epsilon}\big({\sf d}\chi_{t,\epsilon}(x_{t,\epsilon})\big)\\
&= \left(\frac{\partial}{\partial \epsilon}u_{t,\epsilon} + v_{t,\epsilon}\cdot\frac{\partial}{\partial x_{t,\epsilon}}{\sf d}\chi_{t,\epsilon}\right)(x_{t,\epsilon}),
\end{aligned}
\end{equation}
where the independence of the data vector fields $\xi_i$ on $\epsilon$ was used. Taking the differential with respect to time of \eqref{eq:twoparameterstochv} gives
\begin{equation}
\begin{aligned}
{\sf d}\left(\frac{\partial}{\partial \epsilon} x_{t,\epsilon}\right) &= {\sf d}\big(v_{t,\epsilon}(x_{t,\epsilon})\big)\\
&=  \left( {\sf d}v_{t,\epsilon}(x_{t,\epsilon}) + {\sf d}\chi_{t,\epsilon}\cdot\frac{\partial}{\partial x_{t,\epsilon}}v_{t,\epsilon}\right)(x_{t,\epsilon}).
\end{aligned}
\end{equation}
One can then evaluate at $\epsilon=0$ and call upon equality of cross derivative-differential to obtain the result by subtracting. Since $g_{t,\epsilon}$ depends on $t$ in a $C^0$ manner, the integral representation is required. The particle relabelling symmetry permits one to stop writing the explicit dependence on space,
\begin{equation}
\delta u(t)\,dt = {\sf d}v(t) + [{\sf d}\chi_t,v(t)].
\end{equation}
This completes the proof of formula \eqref{def:delta-var} for the variation of $u(t)$.
\end{proof}
The notation in \eqref{eq:twoparameterstochu} needs careful explanation, because it comprises both a stochastic differential equation and a definition. The symbol ${\sf d}\chi_{t,\epsilon}$ is used to define a vector field, whereas ${\sf d}x_{t,\epsilon}$ denotes a stochastic differential equation. This lemma makes the presentation of the stochastic Euler-Poincar\'e theorem particularly simple.
\medskip

\begin{theorem}[Stochastic Euler-Poincar\'e theorem for the diffeomorphisms]\label{thm:SEP}
With the notation as above, the following are equivalent.
\begin{enumerate}[i)]
\item The constrained variational principle
\begin{equation}
\delta\int_{t_1}^{t_2}\ell(u,a)\,dt = 0
\end{equation}
holds on $\mathfrak{X}^s\times V^*$, using variations $\delta u$ and $\delta a$ of the form
\begin{equation}\label{eq:epconstraints}
\delta u = {\sf d}v + [{\sf d}\chi_t,v], \qquad \delta a = -\mathcal{L}_v a,
\end{equation}
where $v(t)\in \mathfrak{X}^s$ is arbitrary and vanishes at the endpoints in time for arbitrary times $t_1,t_2$.
\item The stochastic Euler-Poincar\'e equations hold on $\mathfrak{X}^s\times V^*$
\begin{equation}
{\sf d}\frac{\delta \ell}{\delta u} + \mathcal{L}_{{\sf d}\chi_t}\frac{\delta \ell}{\delta u} = \frac{\delta \ell}{\delta a}\diamond a\,dt,
\label{eq:stochep}
\end{equation}
and the advection equation
\begin{equation}
{\sf d}a + \mathcal{L}_{{\sf d}\chi_t}a = 0.
\label{eq:stochadv}
\end{equation}
\end{enumerate}
\end{theorem}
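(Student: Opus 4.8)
The plan is to follow the classical Euler--Poincar\'e argument of \cite{holm1998euler}, replacing the ordinary chain rule by the Kunita--It\^o--Wentzell formula \eqref{eq:kiwformula} and keeping every stochastic integral in Stratonovich form so that the usual rules of calculus (Leibniz, integration by parts) remain available. Before touching the variational principle I would record that the advection equation \eqref{eq:stochadv} is purely kinematic: since each advected quantity satisfies $a_t=\eta_{t*}a_0$, equivalently $\eta_t^*a_t=a_0$ is constant in $t$, applying \eqref{eq:kiwformula} to $\eta_t^*a_t$ and using invertibility of $\eta_t^*$ yields ${\sf d}a+\mathcal{L}_{{\sf d}\chi_t}a=0$ directly. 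Thus \eqref{eq:stochadv} holds throughout, and the real content of the equivalence is the passage between the constrained variational principle and the momentum equation \eqref{eq:stochep}.

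For the implication i) $\Rightarrow$ ii) I would expand the first variation of the action using the variational derivative,
\begin{equation*}
\delta\int_{t_1}^{t_2}\ell(u,a)\,dt=\int_{t_1}^{t_2}\Big\langle\tfrac{\delta\ell}{\delta u},\delta u\Big\rangle\,dt+\int_{t_1}^{t_2}\Big\langle\tfrac{\delta\ell}{\delta a},\delta a\Big\rangle\,dt,
\end{equation*}
and then substitute the constrained variations \eqref{eq:epconstraints}, writing $\langle\delta\ell/\delta u,\delta u\rangle\,dt=\langle\delta\ell/\delta u,{\sf d}v+[{\sf d}\chi_t,v]\rangle$. The three resulting terms are handled as follows. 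In the $\langle\delta\ell/\delta u,{\sf d}v\rangle$ term I integrate by parts in time; because the Stratonovich integral obeys the ordinary Leibniz rule, ${\sf d}\langle\tfrac{\delta\ell}{\delta u},v\rangle=\langle{\sf d}\tfrac{\delta\ell}{\delta u},v\rangle+\langle\tfrac{\delta\ell}{\delta u},{\sf d}v\rangle$, and the boundary contribution drops since $v$ vanishes at $t_1,t_2$. In the commutator term I use $[{\sf d}\chi_t,v]=\mathcal{L}_{{\sf d}\chi_t}v$ together with the duality $\langle m,\mathcal{L}_X v\rangle=-\langle\mathcal{L}_X m,v\rangle$, valid on the closed manifold $M$, to move the transport operator onto the momentum. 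In the advected term I apply the defining relation of the diamond operator together with integration by parts over $M$ to rewrite $\langle\delta\ell/\delta a,-\mathcal{L}_v a\rangle$ as a pairing of $\tfrac{\delta\ell}{\delta a}\diamond a$ against $v$. Collecting everything gives
\begin{equation*}
\delta\int_{t_1}^{t_2}\ell\,dt=-\int_{t_1}^{t_2}\Big\langle{\sf d}\tfrac{\delta\ell}{\delta u}+\mathcal{L}_{{\sf d}\chi_t}\tfrac{\delta\ell}{\delta u}-\tfrac{\delta\ell}{\delta a}\diamond a\,dt,\ v\Big\rangle.
\end{equation*}

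The final step, and the place where the stochastic setting genuinely departs from the deterministic one, is the fundamental lemma of the calculus of variations. The bracketed object is the differential of a continuous semimartingale, with a drift proportional to $dt$ and diffusions proportional to each $\circ\,dW_t^i$. Requiring the displayed integral to vanish for all admissible $v$ does not immediately annihilate the integrand; instead one invokes the uniqueness of the semimartingale decomposition. Testing against arbitrary (e.g.\ deterministic) $v$ forces the local-martingale part to vanish, so each diffusion coefficient (the $\circ\,dW_t^i$ part of ${\sf d}\tfrac{\delta\ell}{\delta u}$ together with $\mathcal{L}_{\xi_i}\tfrac{\delta\ell}{\delta u}$) is zero, and then the finite-variation part forces the $dt$-coefficient to vanish, which together yield \eqref{eq:stochep}. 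Justifying this separation rigorously -- in particular that the Stratonovich integration by parts above is licit when $\delta\ell/\delta u$ and $v$ are both semimartingales, and that the class of variations $v$ is rich enough to isolate the $dt$ and $\circ\,dW_t^i$ components -- is the main obstacle; the regularity needed is precisely what the KIW framework of \cite{de2020implications} supplies. The converse ii) $\Rightarrow$ i) is obtained by reading the same chain of identities backwards: given \eqref{eq:stochep} and \eqref{eq:stochadv}, the collected expression shows $\delta\int\ell\,dt=0$ for every $v$ vanishing at the endpoints.
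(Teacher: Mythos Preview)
Your proof is correct and follows essentially the same route as the paper: expand the first variation, insert the constrained variations \eqref{eq:epconstraints}, integrate by parts in time using the Stratonovich Leibniz rule and the endpoint conditions on $v$, move the Lie derivative onto $\delta\ell/\delta u$ by duality, rewrite the advected term via the diamond operator, and invoke arbitrariness of $v$; the advection equation is obtained from the KIW formula applied to $a_t=\eta_{t*}a_0$. Your treatment is in fact more detailed than the paper's in two respects: you make the converse direction ii) $\Rightarrow$ i) explicit, and you spell out the stochastic fundamental lemma via the semimartingale decomposition (separating the $dt$ and $\circ\,dW_t^i$ coefficients), whereas the paper simply asserts that arbitrariness of $v$ yields \eqref{eq:stochep}.
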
 

\begin{proof}
Using integration by parts and the endpoint conditions $v(t_1)=0=v(t_2)$, the variation can be computed to be
\begin{equation}
\begin{aligned}
\delta\int_{t_1}^{t_2}\ell(u,a)\,dt 
&= 
\int_{t_1}^{t_2}\left\langle\frac{\delta\ell}{\delta u},\delta u\right\rangle + \left\langle\frac{\delta\ell}{\delta a},\delta a\right\rangle\,dt\\
&= \int_{t_1}^{t_2}\left\langle\frac{\delta\ell}{\delta u},{\sf d}v + [{\sf d}\chi_t,v]\right\rangle + \left\langle\frac{\delta\ell}{\delta a}\,dt,-\mathcal{L}_v a\right\rangle\\
&= \int_{t_1}^{t_2}\left\langle -{\sf d}\frac{\delta\ell}{\delta u} - \mathcal{L}_{{\sf d}\chi_t}\frac{\delta\ell}{\delta u} + \frac{\delta\ell}{\delta a}\diamond a\,dt,v\right\rangle\\
&= 0\,.
\end{aligned}
\end{equation}
Since the vector field $v$ is arbitrary, one obtains the stochastic Euler-Poincar\'e equations. Finally, the advection equation \eqref{eq:stochadv} follows by applying the KIW formula to $a(t)=\eta_{t*}a_0$.
\end{proof}

\begin{remark}
The stochastic Euler-Poincar\'e theorem is equivalent to the version presented in \citet{holm2015variational}, which uses stochastic Clebsch constraints. In \cite{holm2015variational} one can also find an investigation the It\^o formulation of the stochastic Euler-Poincar\'e equation. The general form of the Euler-Poincar\'e theorem uses ${\rm ad}$ in place of the commutator in \eqref{eq:epconstraints} and the derivative of the representation of the group action in place of the Lie derivative. The resulting Euler-Poincar\'e equations are then formulated using ${\rm ad}^*$ for the group that is involved. The general form can be found in \cite{holm1998euler} and will be used in section \eqref{sec:circle}.
\end{remark}

\paragraph{Stochastic Lie-Poisson formulation.}
The stochastic Euler-Poincar\'e equations have an equivalent stochastic Lie-Poisson formulation. To obtain the Lie-Poisson formulation, one must Legendre transform the reduced Lagrangian. The Legendre transformation in the presence of stochasticity becomes itself stochastic in the following way
\begin{equation}
m := \frac{\delta\ell}{\delta u}, \qquad \hslash(m,a)\,dt + \sum_{i=1}^N\langle m,\xi_i\rangle \circ dW_t^i = \langle m,{\sf d}\chi_t\rangle - \ell(u,a)\,dt.
\label{eq:reducedstochlegendre}
\end{equation}
The stochasticity enters the Legendre transformation because the momentum map $m$ is coupled to the stochastic vector field ${\sf d}\chi_t$. The left hand side of the transformation determines the Hamiltonian, which is a semimartingale. The underlying semidirect product group structure has not changed, it is still the $H^s$ diffeomorphisms with a vector space, but the Hamiltonian has become a semimartingale. This implies that in the stochastic case the energy is not conserved, because Hamiltonian depends on time explicitly. Note that \eqref{eq:reducedstochlegendre} emphasises that the Lagrangian does not feature stochasticity in this framework. Instead, the Lagrangian represents the physics in the problem, which does not change. The stochasticity is supposed to account for the difference between observed data and deterministic modelling. The stochastic Lie-Poisson equations are given by
\begin{equation}
{\sf d}(m,a) = -{\rm ad}^*_{(\frac{\delta\hslash}{\delta m},\frac{\delta\hslash}{\delta a})}(m,a)\,dt - \sum_{i=1}^N{\rm ad}^*_{(\xi_i,0)}(m,a)\circ dW_t^i,
\label{eq:stochliepoisson}
\end{equation}
where ${\rm ad}^*$ is given in \eqref{eq:ad*}. Since both the drift and the diffusion part use the same operator (the ${\rm ad}^*$ operator) in \eqref{eq:stochliepoisson}, the stochastic Lie-Poisson equations preserve the same family of Casimirs (or integral conserved quantities) as the deterministic Lie-Poisson equations. The stochastic Lie-Poisson equation \eqref{eq:stochliepoisson} is given in general form, which is convenient for the derivation of families of stochastic wave equations that we will perform in section \ref{sec:circle}. The stochastic Euler-Poincar\'e theorem has a stochastic Kelvin-Noether circulation theorem as a corollary.

\paragraph{Stochastic Kelvin-Noether theorem.} 
Let $\mathfrak{C}^s$ be the space of loops $\gamma:S^1\to\mathfrak{D}^s$, which is acted upon from the left by $\mathfrak{D}^s$. Given an element $m\in\mathfrak{X}^{s*}$, one obtains a covector-valued density whose density is constant in space by dividing the momentum $m$ by the density $\rho$. By considering only the value part of the momentum, the Kelvin-Noether theorem is as follows. The circulation map $\mathcal{K}:\mathfrak{C}^s\times V^*\to\mathfrak{X}^{s**}$ is defined by 
\begin{equation}
\langle \mathcal{K}(\gamma,a),m\rangle = \oint_\gamma\frac{m}{\rho}\,.
\end{equation}
Given a Lagrangian $\ell:\mathfrak{X}^s\times V^*\to \mathbb{R}$,  the \emph{Kelvin-Noether quantity} is defined by
\begin{equation}
I(\gamma,u,a) := \oint_\gamma\frac{1}{\rho}\frac{\delta\ell}{\delta u}\,.
\end{equation}
One can now formulate the following stochastic Kelvin-Noether circulation theorem. 
\medskip

\begin{theorem}[Stochastic Kelvin-Noether theorem]\label{thm:KelThm}
Let $u_t=u(t)$ satisfy the stochastic Euler-Poincar\'e equation \eqref{eq:stochep} and $a_t=a(t)$ the stochastic advection equation \eqref{eq:stochadv}. Let $\eta_t$ be the flow associated to the vector field ${\sf d}\chi_t$. That is, ${\sf d}\chi_t = {\sf d}\eta_t\circ \eta_t^{-1} = u_t\,dt + \sum_{i=1}^N \xi_i\circ dW_t^i$. Let $\gamma_0\in \mathfrak{C}^s$ be a loop. Denote by $\gamma_t = \eta_t\circ \gamma_0$ and define the Kelvin-Noether quantity $I(t):= I(\gamma_t,u_t,a_t)$. Then
\begin{equation}
{\sf d}I(t) = \oint_{\gamma_t}\frac{1}{\rho}\frac{\delta\ell}{\delta a}\diamond a\,dt\,.
\label{eqn:KelThm}
\end{equation}
\end{theorem}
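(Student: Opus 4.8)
The plan is to reduce the stochastic differential of the circulation over the \emph{moving} loop $\gamma_t$ to a differential of an integral over the \emph{fixed} reference loop $\gamma_0$, where ${\sf d}$ commutes with the (now stationary) contour integration. First I would use $\gamma_t=\eta_t\circ\gamma_0$ together with the change-of-variables formula for line integrals of one-forms, $\oint_{\eta_t\circ\gamma_0}\omega=\oint_{\gamma_0}\eta_t^*\omega$, to rewrite the Kelvin--Noether quantity as
\[
I(t)=\oint_{\gamma_t}\frac{1}{\rho}\frac{\delta\ell}{\delta u}=\oint_{\gamma_0}\eta_t^*\Big(\tfrac{1}{\rho}\tfrac{\delta\ell}{\delta u}\Big).
\]
Since $\gamma_0$ is deterministic and independent of $t$, the stochastic differential passes through the integral, giving ${\sf d}I(t)=\oint_{\gamma_0}{\sf d}\big[\eta_t^*\big(\tfrac{1}{\rho}\tfrac{\delta\ell}{\delta u}\big)\big]$.

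Next I would apply the Kunita--It\^o--Wentzell formula \eqref{eq:kiwformula} with $K_t:=\tfrac{1}{\rho}\tfrac{\delta\ell}{\delta u}$, which is a one-form-valued continuous semimartingale (it is a smooth function of the semimartingales $m:=\delta\ell/\delta u$ and $\rho$, which evolve by \eqref{eq:stochep} and \eqref{eq:stochadv} respectively). This yields ${\sf d}\big[\eta_t^*K_t\big]=\eta_t^*\big({\sf d}K_t+\mathcal{L}_{{\sf d}\chi_t}K_t\big)$, so the entire problem reduces to evaluating the \emph{transport derivative} ${\sf d}\big(\tfrac{m}{\rho}\big)+\mathcal{L}_{{\sf d}\chi_t}\big(\tfrac{m}{\rho}\big)$ of the circulation one-form.

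The heart of the argument, and the step I expect to be most delicate, is this transport computation, where the bookkeeping of tensor types must be handled carefully. I would factor the covector-valued momentum density as $m=w\otimes(\rho\,d\mu)$, where $w:=m/\rho$ is a genuine one-form (hence integrable around a loop) and $\rho\,d\mu$ is the advected mass form. Because we work throughout in the Stratonovich convention, both ${\sf d}$ and $\mathcal{L}_{{\sf d}\chi_t}$ obey the ordinary Leibniz rule, so that
\[
{\sf d}m+\mathcal{L}_{{\sf d}\chi_t}m=\big({\sf d}w+\mathcal{L}_{{\sf d}\chi_t}w\big)\otimes(\rho\,d\mu)+w\otimes\big({\sf d}(\rho\,d\mu)+\mathcal{L}_{{\sf d}\chi_t}(\rho\,d\mu)\big).
\]
The mass form is an advected quantity, so the second bracketed term vanishes by \eqref{eq:stochadv}, while the left-hand side equals $\tfrac{\delta\ell}{\delta a}\diamond a\,dt$ by the stochastic Euler--Poincar\'e equation \eqref{eq:stochep}. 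Dividing out the density $\rho$ then produces the clean identity ${\sf d}\big(\tfrac{m}{\rho}\big)+\mathcal{L}_{{\sf d}\chi_t}\big(\tfrac{m}{\rho}\big)=\tfrac{1}{\rho}\big(\tfrac{\delta\ell}{\delta a}\diamond a\big)\,dt$. The subtlety is entirely one of type-tracking: one must verify that dividing the covector-valued density $m$ by $\rho$ yields an honest one-form, and that the Leibniz splitting respects the form-versus-density factorisation of $m$.

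Finally I would substitute this transport identity back through the KIW formula and undo the pullback, obtaining ${\sf d}I(t)=\oint_{\gamma_0}\eta_t^*\big[\tfrac{1}{\rho}\tfrac{\delta\ell}{\delta a}\diamond a\big]\,dt=\oint_{\gamma_t}\tfrac{1}{\rho}\tfrac{\delta\ell}{\delta a}\diamond a\,dt$, which is exactly \eqref{eqn:KelThm}. I would emphasise that the only genuinely stochastic input is the KIW formula in the first reduction; everything downstream is the deterministic Kelvin--Noether calculation carried out pathwise in Stratonovich form. This is precisely why the transport noise $\sum_i\xi_i\circ dW_t^i$ contributes no It\^o correction and the circulation theorem retains its deterministic shape, with a forcing driven purely by the symmetry-breaking diamond term.
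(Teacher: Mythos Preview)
Your proposal is correct and follows essentially the same route as the paper: pull back to the fixed loop $\gamma_0$, apply the KIW formula, invoke the Euler--Poincar\'e equation \eqref{eq:stochep}, and push forward again. The only cosmetic difference is where the density is handled: the paper uses $\eta_t^*\!\big(\tfrac{1}{\rho}\tfrac{\delta\ell}{\delta u}\big)=\tfrac{1}{\rho_0}\,\eta_t^*\!\big(\tfrac{\delta\ell}{\delta u}\big)$ (since $\rho$ is advected) and then applies KIW directly to $m=\delta\ell/\delta u$, whereas you apply KIW to the one-form $m/\rho$ and split via the Stratonovich Leibniz rule, using advection of $\rho\,d\mu$ to kill the second term. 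Both bookkeepings yield the same transport identity, and your more explicit type-tracking of the factorisation $m=w\otimes(\rho\,d\mu)$ is a welcome clarification of a step the paper leaves implicit.
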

\begin{proof}
The statement of the stochastic Kelvin-Noether circulation theorem involves a loop that is moving with the stochastic flow. One can transform to stationary coordinates by pulling back the flow to the initial condition. This pull-back yields
\begin{equation}
I(t) = \oint_{\gamma_t}\frac{1}{\rho}\frac{\delta\ell}{\delta u} = \oint_{\gamma_0}\eta_t^*\left(\frac{1}{\rho}\frac{\delta\ell}{\delta u}\right) = \oint_{\gamma_0}\frac{1}{\rho_0}\eta_t^*\left(\frac{\delta\ell}{\delta u}\right).
\end{equation}
An application of the Kunita-It\^o-Wentzell formula \eqref{eq:kiwformula} leads to 
\begin{equation}
{\sf d}I(t) = \oint_{\gamma_0}\frac{1}{\rho_0}\eta_t^*\left({\sf d}\frac{\delta\ell}{\delta u} + \mathcal{L}_{{\sf d}\chi_t}\frac{\delta \ell}{\delta u}\right) = \oint_{\gamma_0}\frac{1}{\rho_0}\eta_t^*\left(\frac{\delta\ell}{\delta a}\diamond a\right)\,dt,
\end{equation}
since $u$ satisfies the stochastic Euler-Poincar\'e theorem. Transforming back to the moving coordinates by pushing forward with $\eta_t$ yields the final result.
\end{proof}

Thus, Theorem \ref{thm:KelThm} explains how particle relabelling symmetry gives rise to the Kelvin-Noether circulation theorem via Noether's theorem. When the only advected quantity present is the mass density, the loop integral of the diamond terms vanishes. This means that circulation is conserved according to Noether's theorem for an incompressible fluid, or for a barotropically compressible fluid. The presence of other advected quantities breaks the symmetry further and introduces the  \emph{diamond terms} which generate circulation, as one can see in the Kelvin-Noether circulation theorem in equation \eqref{eqn:KelThm}. Consequently, the symmetry breaking due to additional order parameters can provide additional mechanisms for the generation of Kelvin-Noether circulation in ideal fluid dynamics. 

\paragraph{Stochastic Euler-Boussinesq equations.}
Let us now use the theory introduced above to derive the Euler-Boussinesq equations in the domain $\Omega$ shown in Figure \ref{fig:domain}, with a given bottom topography $h(x,y)$, a free surface $\zeta(x,y,t)$ and non-penetration boundary conditions on the lateral walls.
\begin{figure}[H]
\centering
\includegraphics[scale=.33]{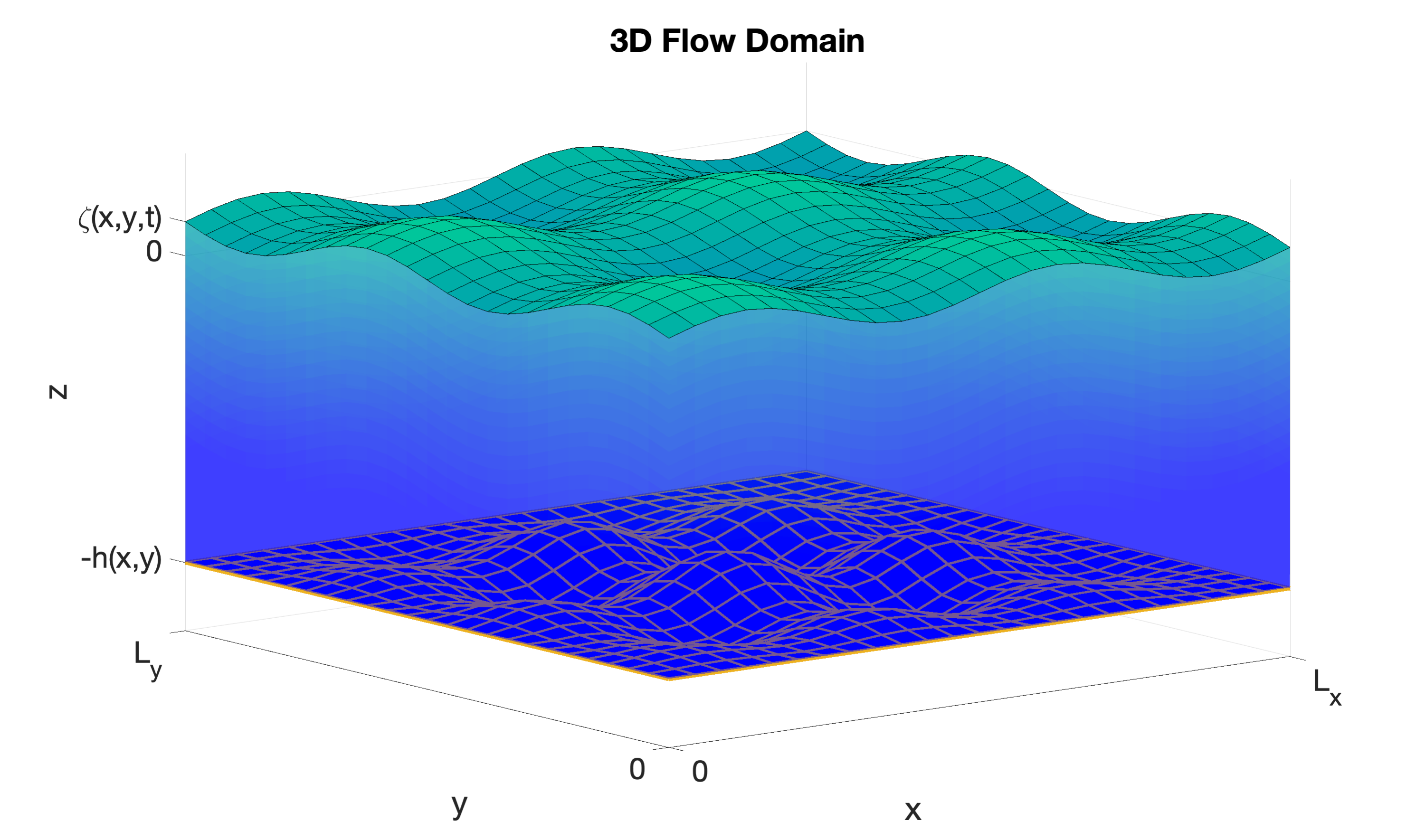}
\caption{The domain $\Omega$ for the 3D Euler-Boussinesq equations, as shown in \cite{holm2021stochastic}.}
\label{fig:domain}
\end{figure}
Let the stochastic vector field be given by 
\begin{equation}
{\sf d}\boldsymbol \chi_{3t}(x,y,z,t) := \mathbf{u}_3(x,y,z,t)\,dt + \sum_{i=1}^M \boldsymbol\xi_{3i}(x,y,z)\circ dW_t^i,
\end{equation}
where $\mathbf{u}_3=(\mathbf{u},w)$ is the three dimensional velocity field, $\mathbf{u}$ is the horizontal velocity field and $w$ is the vertical velocity. The $\boldsymbol\xi_{3i}(x,y,z)$ are data vector fields and they represent spatial velocity-velocity correlations. The $W_t^i$ are independent, identically distributed Brownian motions for each $i=1,\hdots,M$. In what follows, we will employ Einstein's convention of summing over repeated indices to shorten the notation. Before deriving the equations, let us set up the appropriate boundary conditions. When a surface in a moving fluid consists of the same particles for all time, then it is a bounding surface of the fluid. The converse is also true, every bounding surface is a material surface. Let
\begin{equation}
F(x,y,z,t) = 0
\end{equation}
be the equation for the material surface. For $F$ to be a material surface, it must be a Lagrangian invariant, i.e., the stochastic material derivative must vanish. This means that $F$ is required to satisfy
\begin{equation}
\frac{1}{|\nabla_3 F|}\big({\sf d}F + ({\sf d}\boldsymbol\chi_{3t}\cdot\nabla_3)F\big) = 0.
\end{equation}
For the free surface boundary, $F(x,y,z,t)=z-\zeta(x,y,t)$, which implies that the free surface boundary condition is given by
\begin{equation}\label{eq:freesurface}
w\,dt + \hat{\mathbf{z}}\cdot\boldsymbol \xi_{3i}\circ dW_t^i = {\sf d}\zeta + ({\sf d}\boldsymbol \chi_t\cdot\nabla)\zeta \quad \text{ at } z=\zeta(x,y,t).
\end{equation} 
The notation in \eqref{eq:freesurface} uses horizontal and vertical components of ${\sf d}\boldsymbol\chi_{3t}=({\sf d}\boldsymbol \chi_t,w\,dt + \hat{\mathbf{z}}\cdot\boldsymbol \xi_{3i}\circ dW_t^i)$. For the bottom topography, i.e., the bottom boundary condition, we set $F(x,y,z,t) = z + h(x,y)$, which yields
\begin{equation}\label{eq:bathymetry}
w\,dt + \hat{\mathbf{z}}\cdot\boldsymbol \xi_{3i}\circ dW_t^i = -({\sf d}\boldsymbol \chi_t\cdot\nabla)h \quad \text{ at } z=-h(x,y).
\end{equation}
The non-penetration boundary condition on the lateral boundaries is given by
\begin{equation}
{\sf d}\boldsymbol \chi_t\cdot\hat{\mathbf{n}} = 0, \quad \text{ on any vertical lateral boundary},
\end{equation}
where $\hat{\mathbf{n}}$ is the outward pointing unit vector normal to the lateral boundaries. This condition can be obtained from the incompressiblity condition by an application of the divergence theorem to
\begin{equation}
\nabla_3\cdot{\sf d}\boldsymbol \chi_{3t} = 0, i.e., \quad \nabla_3\cdot\mathbf{u}_3 = 0 \text{ and } \nabla_3\cdot\boldsymbol \xi_{3i}=0 \quad \forall i.
\end{equation}
For incompressible flows, the pressure is determined by the incompressibility condition, rather than by an equation of state. Since the fluid velocity field is semimartingale, the pressure also becomes a semimartingale. We use the notation
\begin{equation}
p\circ d\mathbf{S}_t = p_d\,dt + p_i\circ dW_t^i,
\end{equation}
where $\circ d\mathbf{S}_t=(dt,\circ dW_t^1,\hdots,\circ dW_t^M)$ is the notation borrowed from \cite{street2021semi} for semimartingale driven variational principles. Also, here $p_d$ corresponds to the drift part and $p_i$ for $i=1,\hdots,M$ are the pressures associated with the diffusion part. We do not include surface tension in this model. The dynamic boundary condition for the pressure is then given by
\begin{equation}
p\circ d\mathbf{S}_t = 0 \text{ at } z=\zeta(x,y,t) \quad \text{ or } \quad p\circ d\mathbf{S}_t = {\sf d}\zeta(x,y,t) \text{ at } z=0.
\end{equation}
The final boundary condition is on the buoyancy $b$, which is given by
\begin{equation}
\hat{\mathbf{n}}_3\times \nabla_3 b = 0 \text{ on } \partial\Omega
\end{equation}
This boundary condition implies that the boundary $\partial\Omega$ of the domain $\Omega$ is a level set of buoyancy. We can now derive the equations for an Euler-Boussinesq fluid inside the domain $\Omega$. The starting point is the Lagrangian 
\begin{equation}
\ell_{EB} = \int_{\Omega}\left(\frac{1}{2}|\mathbf{u}|^2 + \frac{\sigma^2}{2}w^2 + \frac{1}{{\rm Ro}}\mathbf{u}\cdot\mathbf{R} - \frac{1}{{\rm Fr}^2}(1+\mathfrak{s}b)z\right)D\,dx\,dy\,dz.
\end{equation}
Here $D$ denotes the dimensionless density of the fluid. $\mathbf{R}$ is the vector potential for the Coriolis parameter $f(x,y)$, that is, $\nabla_3\times\mathbf{R} = f(x,y)\hat{\mathbf{z}}$. Furthermore, $\sigma$ denotes the aspect ratio, ${\rm Ro}$ is the Rossby number, ${\rm Fr}$ is the Froude number and $\mathfrak{s}$ is the stratification parameter. To obtain the equations of motion we need a constrained variational principle due to the incompressibility. This means that we formulate a constrained action principle where we set the density equal to unity by a Lagrange multiplier. This Lagrange multiplier can be identified as the pressure. The dimensionless action for the Euler-Boussinesq model is given by
\begin{equation}
S_{EB} = \int_{t_1}^{t_2}\ell_{EB}\,dt - \left\langle\frac{1}{{\rm Fr}^2} p, D-1\right\rangle\circ d\mathbf{S}_t =: \int_{t_1}^{t_2} c\ell_{EB}\circ d\mathbf{S}_t,
\end{equation}
We can now compute the variational derivatives of the constrained Lagrangian $c\ell_{EB}$, which are given by 
\begin{equation}
\begin{aligned}
\frac{\delta c\ell_{EB}}{\delta \mathbf{u}} &= D\left(\mathbf{u}+\frac{1}{{\rm Ro}}\mathbf{R}\right)\\
\frac{\delta c\ell_{EB}}{\delta w} &= \sigma^2 Dw\\
\frac{\delta c\ell_{EB}}{\delta b} &= -\frac{1}{{\rm Fr}^2}Dz,\\
\frac{\delta c\ell_{EB}}{\delta D} &= \left(\frac{1}{2}|\mathbf{u}|^2 + \frac{\sigma^2}{2}w^2 + \frac{1}{{\rm Ro}}\mathbf{u}\cdot\mathbf{R} - \frac{1}{{\rm Fr}^2}(1+\mathfrak{s}b)z\right) - \frac{1}{{\rm Fr}^2} p \circ d\mathbf{S}_t,\\
\frac{\delta c\ell_{EB}}{\delta p} &= \frac{1}{{\rm Fr}^2}(D-1)\circ d\mathbf{S}_t.
\end{aligned}
\end{equation}
Inserting the variational derivatives into the stochastic Euler-Poincar\'e theorem \ref{thm:SEP} yields the stochastic Euler-Boussinesq equations
\begin{equation}
\begin{aligned}
{\sf d}\mathbf{u} + ({\sf d}\boldsymbol\chi_{3t}\cdot\nabla_3)\mathbf{u} + (\nabla\boldsymbol\xi_{3i})\cdot\mathbf{u}_3\circ dW_t^i &= -\frac{1}{{\rm Fr}^2}\nabla (p_d \,dt + p_i\circ dW_t^i) - \frac{1}{{\rm Ro}}f\hat{\mathbf{z}}\times{\sf d}\boldsymbol \chi_t - \frac{1}{{\rm Ro}}\nabla(\boldsymbol \xi_i\cdot\mathbf{R})\circ dW_t^i,\\
\sigma^2\left({\sf d}w +({\sf d}\boldsymbol\chi_{3t}\cdot\nabla_3)w + \Big(\frac{\partial}{\partial z}\boldsymbol\xi_{3i}\Big)\cdot\mathbf{u}_3\circ dW_t^i \right) &= -\frac{1}{{\rm Fr}^2}\frac{\partial}{\partial z}(p_d \,dt + p_i\circ dW_t^i) + \frac{1}{{\rm Fr}^2}(1+\mathfrak{s}b)dt,\\
{\sf d}b + ({\sf d}\boldsymbol \chi_{3t}\cdot\nabla_3)b &= 0,\\
\nabla_3\cdot{\sf d}\boldsymbol\chi_{3t} &= 0.
\end{aligned}
\end{equation}
The Kelvin circulation theorem is given by
\begin{equation}
\begin{aligned}
{\sf d}\oint_{c({\sf d}\boldsymbol \chi_{3t})}\left(\mathbf{u}_3 + \frac{1}{{\rm Ro}}(\mathbf{R},0)\right)\cdot d\mathbf{x}_3 &= -\frac{\mathfrak{s}}{{\rm Fr}^2}\oint_{c({\sf d}\boldsymbol \chi_{3t})} b\,dz\,dt\\
&= -\frac{\mathfrak{s}}{{\rm Fr}^2}\int\!\!\int_{\partial S = c({\sf d}\boldsymbol\chi_{3t})}\hat{\mathbf{z}}\times\nabla_3 b\cdot d\mathbf{S}_3\,dt.
\end{aligned}
\end{equation}
The Kelvin circulation theorem shows that for the Euler-Boussinesq model circulation is being generated whenever the buoyancy gradient does not align with the vertical unit vector. Much more can be said about the structure of these equations, such as a Lagrangian invariant known as the potential vorticity, an infinite family of conserved integral quantities known as Casimirs, for which we refer to \cite{holm2021stochastic, luesink2021stochastic}. In the next section, we will investigate the implications of the stochastic Euler-Poincar\'e equations and Lie-Poisson equations in one spatial dimension.

\section{Diffeomorphisms on the circle}\label{sec:circle}
\subsection*{Key points}
\begin{center}
    $\bullet\,$ Diffeomorphisms on $S^1$ $\qquad$
    $\bullet\,$ Virasoro-Bott group $\qquad$
    $\bullet\,$ SPDEs on $S^1$ 
\end{center}
In the previous section we discussed diffeomorphism groups over arbitrary smooth Riemannian manifolds. In dimension one, there are two smooth Riemannian manifolds. Namely, the noncompact real line, which is homeomorphic to $\mathbb{R}$ in a trivial manner, and the circle $S^1 = \mathbb{R}\setminus\mathbb{Z}$ which is locally homeomorphic to $\mathbb{R}$. Though the circle is the lowest dimensional compact manifold one can study, the diffeomorphism group over the circle has a rich mathematical nature that leads to several remarkable partial differential equations. A key difference with higher dimensional manifolds is that in dimension one there is no vorticity, since every 1-form is exact, so therefore also no Kelvin circulation theorem. This makes the one-dimensional case special.

The geodesics for a given metric of the diffeomorphism group over the circle gives rise to families of well-known wave equations. When the kinetic energy is given by the $L^2$ metric, the geodesic equation is Burgers equation, which is known to form shocks in finite time. When the kinetic energy is given by the $H^1$ metric, the geodesic equation is the Camassa-Holm equation, which is completely integrable. Changing the metric can therefore regularise equations and this approach was used with great success in the context of the $\alpha$-model in turbulence, see \cite{foias2001navier}.  

Changing the metric is a geometric resolution to a problem that also has a topological approach. The topological approach improves the behaviour of the coadjoint orbits and also leads to a completely integrable partial differential equation for the $L^2$ metric, the Korteweg-De Vries equation. The Virasoro-Bott group $Vir$ is obtained by a unique nontrivial central extension of the circle diffeomorphisms, using the Bott-Thurston cocycle. At the Lie algebra level, one has a corresponding unique central extension by the Gel'fand-Fuchs cocycle and one obtains the Virasoro algebra $\mathfrak{vir}$. In this section we discuss the adjoint and coadjoint representation theory of the Virasoro-Bott group and the Virasoro algebra. Using these representations and those of the unextended case, we derive the stochastic Burgers equation, the stochastic Korteweg-De Vries equation and the stochastic Camassa-Holm equation without dispersion and finally the stochastic Camassa-Holm equation with dispersion. All of these equations play a role in fluid dynamics. In passing, we also derive the stochastic Hunter-Saxton equation, which plays a role in the study of nematic liquid crystals.

We start by extending the Lie algebra of vector fields on the circle rather than starting by extending the group of orientation preserving diffeomorphisms of the circle. Starting with the algebra is the convention in infinite dimensions, because every infinite dimensional topological group has a corresponding Lie algebra, but the converse is not true. Fix a coordinate $\theta$ on the circle, then any vector field in $\mathfrak{X}(S^1)$ can be written as $f(\theta)\partial_\theta$ for any function $f\in C^\infty(S^1)$, where $\partial_\theta$ is shorthand notation for $\frac{\partial}{\partial \theta}$. The space of vector fields $\mathfrak{X}(S^1)$ is a Lie algebra whose bracket is given by
\begin{equation}
[f(\theta)\partial_\theta,g(\theta)\partial_\theta] = \Big(f(\theta)g'(\theta) - g(\theta)f'(\theta)\Big)\partial_\theta.
\end{equation}
The following map $\omega:\mathfrak{X}(S^1)\times\mathfrak{X}(S^1)\to\mathbb{R}$, given by
\begin{equation}
\omega\Big(f(\theta)\partial_\theta,g(\theta)\partial_\theta\Big) = \int_{S^1}f'(\theta)g''(\theta)\,d\theta,
\end{equation}
is the Gel'fand-Fuchs 2-cocycle which nontrivially extends the algebra $\mathfrak{X}(S^1)$ to the Virasoro algebra $\mathfrak{vir}$. The Gel'fand-Fuchs map is a 2-cocycle, which means that it satisfies the identity
\begin{equation}
\omega([f\partial_\theta,g\partial_\theta],h\partial_\theta) + \omega([h\partial_\theta,f\partial_\theta],g\partial_\theta) + \omega([g\partial_\theta,h\partial_\theta],f\partial_\theta) = 0.
\end{equation}
To define an extension of a Lie algebra, which is a Lie algebra itself, the resulting extended bracket is of course required to satisfy the Jacobi identity. The 2-cocycle identity above is necessary for the extended bracket to satisfy the Jacobi identity. We can then form the extended Lie bracket with $a,b\in\mathbb{R}$
\begin{equation}\label{eq:virbracket}
[(f\partial_\theta,a),(g\partial_\theta,b)]_{\mathfrak{vir}}=([f\partial_\theta,g\partial_\theta],\omega(f\partial_\theta,g\partial_\theta)),
\end{equation}
which defines the Virasoro algebra as $(\mathfrak{{vir}},[\,\cdot\,,\,\cdot\,]_{\mathfrak{vir}})=(\mathfrak{X}(S^1)\oplus\mathbb{R},([\,\cdot\,,\,\cdot\,],\omega(\,\cdot\,,\,\cdot\,))$. It can be shown that the Virasoro algebra is the unique (up to isomorphism) central extension of the algebra of vector fields on the circle. The proof of this statement can be found in \cite{khesin2008geometry}, which uses the fact that the second cohomology group $H^2(\mathfrak{X}(S^1),\mathbb{R})$ is one-dimensional and generated by the Gel'fand-Fuchs cocycle. It can also be shown that the Gel'fand-Fuchs cocycle is nontrivial, which means that it is not a coboundary. This means that there does not exist a map (called a coboundary) $\alpha:\mathfrak{X}(S^1)\to\mathbb{R}$ such that $\omega(X,Y) = \alpha([X,Y])$. In general it is not guaranteed that there corresponds a group to centrally extended Lie algebra, but for the Virasoro algebra this is the case.

The Virasoro algebra corresponds to the group known as the Virasoro-Bott group, which is the central extension of the orientation preserving diffeomorphisms on the circle $\mathfrak{D}^+(S^1)$ by the Bott 2-cocycle $B:\mathfrak{D}^+(S^1)\times\mathfrak{D}^+(S^1)\to \mathbb{R}$. Let $\varphi,\psi\in \mathfrak{D}^+(S^1)$ and let $\theta$ be the coordinate on $S^1$. By $\varphi'$ we mean a derivative of $\varphi$ with respect to $\theta$. The Bott-Thurston 2-cocycle is then given by 
\begin{equation}
B(\varphi,\psi) = \frac{1}{2}\int_{S^1}\log(\varphi'\circ\psi) d\log\psi',
\end{equation}
here given with constant $\frac{1}{2}$. The usual normalisation constant is $-\frac{1}{48\pi}$, which involves $\frac{1}{2\pi}$ to normalise the circumference of the circle. The remaining factor of $-\frac{1}{24}$ is deeply rooted in the theory of modular forms, to which the Bott-Thurston cocycle has a remarkable link. The Bott-Thurston cocycle is required to satisfy the group 2-cocycle identity
\begin{equation}
B(\varphi\circ\zeta,\psi)+B(\varphi,\zeta) = B(\varphi,\zeta\circ\psi)+B(\zeta,\psi).
\end{equation}
The group 2-cocycle identity is necessary to guarantee that the group operation on the Virasoro-Bott group is associative. The Bott-Thurston 2-cocycle is a continuous 2-cocycle on $\mathfrak{D}^+(S^1)$ and the central extension is the Virasoro-Bott group $(Vir,\bullet) = (\mathfrak{D}^+(S^1)\oplus\mathbb{R},(\circ,+))$, where the product is defined by
\begin{equation}
(\varphi,\alpha_1)\bullet(\psi,\alpha_2) = (\varphi\circ\psi,\alpha_1+\alpha_2+B(\varphi,\psi)).
\end{equation}
Note that the group action on the $\mathbb{R}$-part is addition, which is Abelian. Hence in the extended Lie bracket \eqref{eq:virbracket}, the contributions of $\alpha_1$ and $\alpha_2$ disappear. The Lie algebra corresponding to $Vir$ is the Virasoro algebra $\mathfrak{vir}$. This can be shown by performing the same procedure as in section \ref{sec:adjcoadj}, i.e., taking two one-parameter subgroups of the Virasoro-Bott group and differentiating them with respect to their parameters.

We now want to move on towards the adjoint and coadjoint representations. The smooth dual space $\mathfrak{X}(S^1)^*$ is identified with the space of covector-valued densities $\Omega^1(S^1,T^*S^1) = \{u(\theta)d\theta\otimes d\theta\}$ with the pairing given by
\begin{equation}
\langle u(\theta)d\theta\otimes d\theta , f(\theta)\partial_\theta\rangle = \int_{S^1} \big(f(\theta)\partial_\theta\contract u(\theta)d\theta\big)d\theta = \int_{S^1}f(\theta)u(\theta)\,d\theta,
\end{equation}
for any vector field $f(\theta)\partial_\theta\in\mathfrak{X}(S^1)$.  With the pairing between the Virasoro algebra and its dual defined, we can compute the adjoint and coadjoint representations following the same route as in section \ref{sec:adjcoadj}. For the unextended circle diffeomorphisms the following table lists the all the adjoint and coadjoint operators
\begin{table}[H]
\centering
\begin{tabular}{l|ll}
${\rm AD}:\mathfrak{D}(S^1)\times\mathfrak{D}(S^1)\to\mathfrak{D}(S^1)$ & ${\rm AD}_{\varphi}:$ & $ \widetilde{\varphi} \mapsto \varphi\circ\widetilde{\varphi}\circ\varphi^{-1}$ \\
${\rm Ad}:\mathfrak{D}(S^1)\times \mathfrak{X}(S^1)\to\mathfrak{X}(S^1)$ & ${\rm Ad}_{\varphi}:$ & $ f(\theta)\partial_\theta\mapsto f(\varphi)\partial_\varphi$ \\
${\rm Ad}^*:\mathfrak{D}(S^1)\times \mathfrak{X}(S^1)^*\to\mathfrak{X}(S^1)^*$ & ${\rm Ad}^*_{\varphi^{-1}}:$ & $ u(\theta)d\theta\otimes d\theta\mapsto u(\varphi)d\varphi\otimes d\varphi$ \\
${\rm ad}:\mathfrak{X}(S^1)\times\mathfrak{X}(S^1)\to\mathfrak{X}(S^1)$ & ${\rm ad}_{f\partial_\theta}:$ & $g\partial_\theta\mapsto (fg'-gf')\partial_\theta$\\
${\rm ad}^*:\mathfrak{X}(S^1)\times\mathfrak{X}(S^1)^*\to\mathfrak{X}(S^1)^*$ & ${\rm ad}^*_{f\partial_\theta}:$ & $u\,d\theta\otimes d\theta \mapsto -((uf)'+uf')d\theta\otimes d\theta$
\end{tabular}
\caption{The adjoint and coadjoint actions of the diffeomorphism group over the circle and the vector fields over the circle.}
\label{tab:diffcircle}
\end{table}
For the Virasoro-Bott group and the Virasoro algebra, the adjoint and coadjoint operators are given by
\begin{table}[H]
\centering
\begin{tabular}{l|ll}
${\rm AD}:Vir\times Vir\to Vir$ & ${\rm AD}_{(\varphi,\alpha)}:$ & $(\widetilde{\varphi},\widetilde{\alpha})\mapsto \big(\varphi\circ\widetilde{\varphi}\circ\varphi^{-1},\, \frac{1}{2} \int_{S^1}\log\big((\varphi\circ\widetilde{\varphi})'\circ\varphi^{-1}\big)d(\log\varphi^{-1})' + \widetilde{\alpha}\big)$\\
${\rm Ad}:Vir \times \mathfrak{vir}\to\mathfrak{vir}$ & ${\rm Ad}_{(\varphi,\alpha)}:$ & $(f(\theta)\partial_\theta,b)\mapsto\big(f(\varphi)\partial_{\varphi},\, b+\frac{1}{2}\int_{S^1}(f\circ\varphi^{-1})d\log(\varphi^{-1})'\big)$\\
${\rm Ad}^*:Vir\times \mathfrak{vir}^*\to\mathfrak{vir}^*$ & ${\rm Ad}^*_{(\varphi,\alpha)^{-1}}:$ & $\big(u(\theta)d\theta\otimes d\theta,a\big)\mapsto \Big(\big(u(\varphi)(\varphi')^2 + a(S\varphi)(\theta)\big)d\theta\otimes d\theta,\, a\Big)$\\
${\rm ad}:\mathfrak{vir}\times\mathfrak{vir}\to\mathfrak{vir}$ &  ${\rm ad}_{(f\partial_\theta,b)}:$ & $(g\partial_\theta,c)\mapsto \Big((fg'-gf')\partial_\theta,\, \int_{S^1}f'g''d\theta\Big)$\\
${\rm ad}^*:\mathfrak{vir}\times\mathfrak{vir}^*\to\mathfrak{vir}^*$ & ${\rm ad}^*_{(f\partial_\theta,b)}:$ & $\big(u\, d\theta\otimes d\theta,a\big)\mapsto \big(-((uf)'+uf'+af''')d\theta\otimes d\theta,\, 0\big)$
\end{tabular}
\caption{The adjoint and coadjoint actions and operators of the Virasoro-Bott group and the Virasoro algebra.}
\label{tab:virasoro}
\end{table}
Contrary to the semidirect product case in Table \ref{tab:semidirect}, the central extension variable influences the momentum variable, but plays no significant role in the coadjoint actions ${\rm Ad}^*$ and ${\rm ad}^*$. This automatically implies that the variable $a$ in the pair $(u\,d\theta\otimes d\theta, a)$ has no dynamics of itself. In the coadjoint action ${\rm Ad}^*$, it should be noted that $u(\varphi)(\varphi')^2(d\theta\otimes d\theta) = u(\varphi)(d\varphi \otimes d\varphi)$ and the operator $(S\varphi)(\theta)$ is the Schwarzian derivative of the diffeomorphism $\varphi$, which is given by
\begin{equation}
(S\varphi)(\theta) = \frac{\varphi'''(\theta)}{\varphi'(\theta)}-\frac{3}{2}\left(\frac{\varphi''(\theta)}{\varphi'(\theta)}\right)^2.
\end{equation}
The Schwarzian derivative is an operator that appears in various areas of mathematics as shown by \cite{ovsienko2009schwarzian}. The Schwarzian derivative is special because it is invariant under M\"obius transformations. This means that for $a,b,c,d\in\mathbb{R}$, $\varphi,\psi\in \mathfrak{D}(S^1)$ and $\theta\in S^1$, we have
\begin{equation}
(S\varphi)(\theta) = (S\psi)(\theta)\quad \text{ implies }\quad \psi(\theta) = \frac{a\varphi(\theta) + b}{c\varphi(\theta) + d}.
\end{equation} 
The invariance under $SL(2,\mathbb{R})$ links the Schwarzian derivative with the theory of modular forms. We now have all the tools that are necessary for the derivation of the stochastic Burger's equation, the stochastic Camassa-Holm equation, the Hunter-Saxton equation, which are the stochastic partial differential equations that arise as geodesic equations on the diffeomorphism group of the circle perturbed by stochastic advection by Lie transport (SALT). We also are ready for deriving the stochastic Korteweg-De Vries equation, the stochastic Camassa-Holm equation with dispersion and the stochastic Hunter-Saxton equation with dispersion, which arise as the geodesic equations on the Virasoro-Bott group with SALT.

\paragraph{Stochastic Hopf equation.}
One obtains the stochastic Hopf equation by taking the semimartingale Hamiltonian \eqref{eq:reducedstochlegendre} with drift given by the $L^2$ metric on the group of diffeomorphisms over the circle. Substituting the expression for ${\rm ad}^*$ associated with the diffeomorphisms over the circle given in Table \ref{tab:diffcircle} into \eqref{eq:stochliepoisson} yields 
\begin{equation}\label{eq:hopf}
\begin{aligned}
{\sf d} m + (m\,{\sf d}\chi_t)_x + m({\sf d}\chi_t)_x &= 0,\\
m&= u.
\end{aligned}
\end{equation}

\paragraph{Stochastic Korteweg--De Vries equation}
The stochastic Korteweg--De Vries equation is obtained by taking the same semimartingale Hamiltonian as for the Hopf equation \eqref{eq:hopf}, but now for the Virasoro-Bott group. Substituting the expression for ${\rm ad}^*$ associated with the Virasoro-Bott group given in Table \ref{tab:virasoro} into \eqref{eq:stochliepoisson} yields
\begin{equation}
\begin{aligned}
{\sf d} m + (m\,{\sf d}\chi_t)_x + m({\sf d}\chi_t)_x +\varepsilon({\sf d}\chi_t)_{xxx} &= 0,\\
m&= u.
\end{aligned}
\end{equation}

\paragraph{Stochastic Camassa--Holm equation}
The stochastic Camassa--Holm equation is derived by returning to the diffeomorphism group over the circle and taking the semimartingale Hamiltonian as in \eqref{eq:reducedstochlegendre}, but now with the drift given by the $H^1$ metric. Substituting this into \eqref{eq:stochliepoisson} yields the stochastic Camassa--Holm equation
\begin{equation}
\begin{aligned}
{\sf d} m + (m\,{\sf d}\chi_t)_x + m({\sf d}\chi_t)_x &= 0,\\
m &= u - \alpha^2 u_{xx}.
\end{aligned}
\end{equation}

\paragraph{Stochastic dispersive Camassa--Holm equation}
If instead of the diffeomorphisms over the circle, one switches the group to the Virasoro-Bott group, but keeps the same semimartingale Hamiltonian, one obtains the stochastic Camassa--Holm equation with dispersion
\begin{equation}
\begin{aligned}
{\sf d} m + (m\,{\sf d}\chi_t)_x + m({\sf d}\chi_t)_x + \varepsilon({\sf d}\chi_t)_{xxx} &= 0,\\
m &= u - \alpha^2 u_{xx}.
\end{aligned}
\end{equation}

\paragraph{Stochastic Hunter--Saxton equation}
The stochastic Hunter--Saxton equation is obtained by taking the semimartingale Hamiltonian with drift given by the homogeneous $\dot{H}^1$ norm, that is, the $H^1$ norm without the $L^2$ part. On the group of diffeomorphisms over the circle, one finds
\begin{equation}
\begin{aligned}
{\sf d} m + (m\,{\sf d}\chi_t)_x + m({\sf d}\chi_t)_x &= 0,\\
m &= - \alpha^2 u_{xx}.
\end{aligned}
\end{equation}

\paragraph{Stochastic dispersive Hunter--Saxton equation}
Taking the same semimartingale Hamiltonian as for the stochastic Hunter--Saxton equation, but changing the group the Virasoro--Bott group, one finds the stochastic Hunter--Saxton equation with dispersion
\begin{equation}
\begin{aligned}
{\sf d} m + (m\,{\sf d}\chi_t)_x + m({\sf d}\chi_t)_x + \varepsilon({\sf d}\chi_t)_{xxx} &= 0,\\
m &= - \alpha^2 u_{xx}.
\end{aligned}
\end{equation}

\section{Conclusion}
We are dealing with Lie group-invariant variational principles, which can be either deterministic or stochastic. Lie-symmetry reduction of Hamilton's principle produces Euler-Poincar\'e equations on the Lagrangian side and Lie-Poisson equations on the Hamiltonian side. The stochastic Lie chain rule allows us to promote these deterministic equations to create stochastic geometric mechanics. We have applied this framework to the Euler-Boussinesq system and to shallow water wave theory in 1D.

\section*{Acknowledgments}
We are grateful to many friends and colleagues who have helped us in understanding the recent development of stochastic geometric mechanics, particularly C. J. Cotter, D. Crisan, A. Bethencourt de Le\'on, S.R. Ephrati, A.D. Franken, F. Gay-Balmaz, B.J. Geurts, R. Hu, W. Pan, J.-M. Leahy, J. P. Ortega, T. Ratiu, O. D. Street, T. Tyranowski, and S. Takao. 
The work of DH presented here was partially supported by European Research Council (ERC) Synergy grant entitled ``Stochastic Transport in Upper Ocean Dynamics'', STUOD - DLV-856408.
\bibliographystyle{plainnat}
\bibliography{biblio}

\begin{thebibliography}{38}
\providecommand{\natexlab}[1]{#1}
\providecommand{\url}[1]{\texttt{#1}}
\expandafter\ifx\csname urlstyle\endcsname\relax
  \providecommand{\doi}[1]{doi: #1}\else
  \providecommand{\doi}{doi: \begingroup \urlstyle{rm}\Url}\fi

\bibitem[Abraham and Marsden(1978)]{abraham1978foundations}
Ralph Abraham and Jerrold~E Marsden.
\newblock \emph{Foundations of mechanics}, volume~36.
\newblock Benjamin/Cummings Publishing Company Reading, Massachusetts, 1978.

\bibitem[Abraham and Smale(1963)]{abraham1963lectures}
Ralph Abraham and Stephen Smale.
\newblock \emph{Lectures of Smale on differential topology}.
\newblock Columbia University, 1963.

\bibitem[Arnol'd(1966)]{arnold1966geometrie}
Vladimir Arnol'd.
\newblock Sur la g{\'e}om{\'e}trie diff{\'e}rentielle des groupes de lie de
  dimension infinie et ses applications {\`a} l'hydrodynamique des fluides
  parfaits.
\newblock In \emph{Annales de l'institut Fourier}, volume~16, pages 319--361,
  1966.

\bibitem[Bethencourt~de Leon et~al.(2020)Bethencourt~de Leon, Holm, Luesink,
  and Takao]{de2020implications}
Aythami Bethencourt~de Leon, Darryl~D Holm, Erwin Luesink, and So~Takao.
\newblock Implications of {K}unita--{I}t{\^o}--{W}entzell formula for $k$-forms
  in stochastic fluid dynamics.
\newblock \emph{Journal of Nonlinear Science}, pages 1--34, 2020.

\bibitem[Bismut(1982)]{bismut1982mecanique}
Jean-Michel Bismut.
\newblock M{\'e}canique al{\'e}atoire.
\newblock In \emph{Ecole d'Et{\'e} de Probabilit{\'e}s de Saint-Flour X-1980},
  pages 1--100. Springer, 1982.

\bibitem[Cotter et~al.(2018)Cotter, Crisan, Holm, Pan, and
  Shevchenko]{cotter2018modelling}
Colin Cotter, Dan Crisan, Darryl~D Holm, Wei Pan, and Igor Shevchenko.
\newblock Modelling uncertainty using circulation-preserving stochastic
  transport noise in a 2-layer quasi-geostrophic model.
\newblock \emph{arXiv preprint arXiv:1802.05711}, 2018.

\bibitem[Cotter et~al.(2019{\natexlab{a}})Cotter, Crisan, Holm, Pan, and
  Shevchenko]{cotter2019numerically}
Colin Cotter, Dan Crisan, Darryl~D Holm, Wei Pan, and Igor Shevchenko.
\newblock Numerically modeling stochastic {L}ie transport in fluid dynamics.
\newblock \emph{Multiscale Modeling \& Simulation}, 17\penalty0 (1):\penalty0
  192--232, 2019{\natexlab{a}}.

\bibitem[Cotter et~al.(2019{\natexlab{b}})Cotter, Crisan, Holm, Pan, and
  Shevchenko]{cotter2019particle}
Colin Cotter, Dan Crisan, Darryl~D Holm, Wei Pan, and Igor Shevchenko.
\newblock A particle filter for {S}tochastic {A}dvection by {L}ie {T}ransport
  {(SALT)}: {A} case study for the damped and forced incompressible 2{D}
  {E}uler equation.
\newblock \emph{arXiv preprint arXiv:1907.11884}, 2019{\natexlab{b}}.

\bibitem[Cotter and Holm(2013)]{cotter2013noether}
Colin~J Cotter and Darryl~D Holm.
\newblock On noether’s theorem for the euler--poincar{\'e} equation on the
  diffeomorphism group with advected quantities.
\newblock \emph{Foundations of Computational Mathematics}, 13:\penalty0
  457--477, 2013.

\bibitem[Cotter et~al.(2017)Cotter, Gottwald, and Holm]{cotter2017stochastic}
Colin~J Cotter, Georg~A Gottwald, and Darryl~D Holm.
\newblock Stochastic partial differential fluid equations as a diffusive limit
  of deterministic {L}agrangian multi-time dynamics.
\newblock \emph{Proceedings of the Royal Society A: Mathematical, Physical and
  Engineering Sciences}, 473\penalty0 (2205):\penalty0 20170388, 2017.

\bibitem[Crisan et~al.(2022)Crisan, Holm, Leahy, and
  Nilssen]{crisan2022variational}
Dan Crisan, Darryl~D Holm, James-Michael Leahy, and Torstein Nilssen.
\newblock Variational principles for fluid dynamics on rough paths.
\newblock \emph{Advances in Mathematics}, 404:\penalty0 108409, 2022.

\bibitem[Ebin and Marsden(1970)]{ebin1970groups}
David~G Ebin and Jerrold Marsden.
\newblock Groups of diffeomorphisms and the motion of an incompressible fluid.
\newblock \emph{Ann. Math}, 92\penalty0 (1):\penalty0 102--163, 1970.

\bibitem[Ebin(1967)]{ebin1967space}
David~Gregory Ebin.
\newblock \emph{On the space of {R}iemannian metrics}.
\newblock PhD thesis, Massachusetts Institute of Technology, 1967.

\bibitem[Ephrati et~al.(2023)Ephrati, Cifani, Luesink, and
  Geurts]{ephrati2023data}
Sagy~R Ephrati, Paolo Cifani, Erwin Luesink, and Bernard~J Geurts.
\newblock Data-driven stochastic lie transport modelling of the 2d euler
  equations.
\newblock \emph{Journal of Advances in Modeling Earth Systems}, page
  e2022MS003268, 2023.

\bibitem[Foias et~al.(2001)Foias, Holm, and Titi]{foias2001navier}
Ciprian Foias, Darryl~D Holm, and Edriss~S Titi.
\newblock The navier--stokes-alpha model of fluid turbulence.
\newblock \emph{Physica D: Nonlinear Phenomena}, 152:\penalty0 505--519, 2001.

\bibitem[Gelfand et~al.(2000)Gelfand, Silverman, et~al.]{gelfand2000calculus}
Izrail~Moiseevitch Gelfand, Richard~A Silverman, et~al.
\newblock \emph{Calculus of variations}.
\newblock Courier Corporation, 2000.

\bibitem[Holm(2015)]{holm2015variational}
Darryl~D Holm.
\newblock Variational principles for stochastic fluid dynamics.
\newblock \emph{Proceedings of the Royal Society A: Mathematical, Physical and
  Engineering Sciences}, 471\penalty0 (2176):\penalty0 20140963, 2015.

\bibitem[Holm and Luesink(2021)]{holm2021stochastic}
Darryl~D Holm and Erwin Luesink.
\newblock Stochastic wave--current interaction in thermal shallow water
  dynamics.
\newblock \emph{Journal of Nonlinear Science}, 31:\penalty0 1--56, 2021.

\bibitem[Holm et~al.(1998)Holm, Marsden, and Ratiu]{holm1998euler}
Darryl~D Holm, Jerrold~E Marsden, and Tudor~S Ratiu.
\newblock The {E}uler--{P}oincar{\'e} equations and semidirect products with
  applications to continuum theories.
\newblock \emph{Advances in Mathematics}, 137\penalty0 (1):\penalty0 1--81,
  1998.

\bibitem[Karatzas and Shreve(1998)]{karatzas1998brownian}
Ioannis Karatzas and Steven~E Shreve.
\newblock Brownian motion.
\newblock In \emph{Brownian Motion and Stochastic Calculus}, pages 47--127.
  Springer, 1998.

\bibitem[Khesin and Wendt(2008)]{khesin2008geometry}
Boris Khesin and Robert Wendt.
\newblock \emph{The geometry of infinite-dimensional groups}, volume~51.
\newblock Springer Science \& Business Media, 2008.

\bibitem[Kirillov(1962)]{kirillov1962unitary}
Alexander~A Kirillov.
\newblock Unitary representations of nilpotent {L}ie groups.
\newblock \emph{RuMaS}, 17\penalty0 (4):\penalty0 53--104, 1962.

\bibitem[Kirillov(1999)]{kirillov1999merits}
Alexandre Kirillov.
\newblock Merits and demerits of the orbit method.
\newblock \emph{Bulletin of the American Mathematical Society}, 36\penalty0
  (4):\penalty0 433--488, 1999.

\bibitem[Kostant(1970)]{kostant1970quantization}
Bertram Kostant.
\newblock Quantization and unitary representations.
\newblock In \emph{Lectures in modern analysis and applications III}, pages
  87--208. Springer, 1970.

\bibitem[L{\'a}zaro-Cam{\i} and Ortega(2008)]{lazaro2008stochastic}
J~L{\'a}zaro-Cam{\i} and J~Ortega.
\newblock Stochastic {H}amiltonian dynamical systems.
\newblock \emph{Reports on Mathematical Physics}, \penalty0 (1), 2008.

\bibitem[Luesink(2021)]{luesink2021stochastic}
Erwin Luesink.
\newblock Stochastic geometric mechanics of thermal ocean dynamics.
\newblock 2021.

\bibitem[Marle et~al.(2013)]{marle2013henri}
Charles-Michel Marle et~al.
\newblock On {H}enri {P}oincar{\'e}’s note “sur une forme nouvelle des
  {\'e}quations de la m{\'e}canique”.
\newblock \emph{Journal of geometry and symmetry in physics}, 29:\penalty0
  1--38, 2013.

\bibitem[Marsden and Weinstein(1974)]{marsden1974reduction}
Jerrold Marsden and Alan Weinstein.
\newblock Reduction of symplectic manifolds with symmetry.
\newblock \emph{Reports on mathematical physics}, 5\penalty0 (1):\penalty0
  121--130, 1974.

\bibitem[Marsden et~al.(1984)Marsden, Ra{\c{t}}iu, and
  Weinstein]{marsden1984semidirect}
Jerrold~E Marsden, Tudor Ra{\c{t}}iu, and Alan Weinstein.
\newblock Semidirect products and reduction in mechanics.
\newblock \emph{Transactions of the american mathematical society},
  281\penalty0 (1):\penalty0 147--177, 1984.

\bibitem[Noether(1918)]{noether1918invariante}
Emmy Noether.
\newblock Invariante {V}ariationsprobleme. {N}achrichten von der {G}esellschaft
  der {W}issenschaften zu g{\"o}ttingen, 235-257.
\newblock \emph{E. Noether: Gesammelte Abhandlungen, ed. N. Jacobson}, pages
  248--270, 1918.

\bibitem[Ovsienko and Tabachnikov(2009)]{ovsienko2009schwarzian}
Valentin Ovsienko and Sergei Tabachnikov.
\newblock What is the schwarzian derivative.
\newblock \emph{Notices of the AMS}, 56\penalty0 (1):\penalty0 34--36, 2009.

\bibitem[Palais(1966)]{palais1966foundations}
Richard~S Palais.
\newblock \emph{Foundations of global non-linear analysis}.
\newblock Department of Mathematics, Brandeis University, 1966.

\bibitem[Poincar{\'e}(1901)]{poincare1901forme}
Henri Poincar{\'e}.
\newblock Sur une forme nouvelle des {\'e}quations de la m{\'e}canique.
\newblock \emph{CR Acad. Sci}, 132:\penalty0 369--371, 1901.

\bibitem[Smale(1970{\natexlab{a}})]{smale1970topologya}
Steve Smale.
\newblock Topology and mechanics. i.
\newblock \emph{Inventiones mathematicae}, 10\penalty0 (4):\penalty0 305--331,
  1970{\natexlab{a}}.

\bibitem[Smale(1970{\natexlab{b}})]{smale1970topologyb}
Steven Smale.
\newblock Topology and mechanics. ii.
\newblock \emph{Inventiones mathematicae}, 11\penalty0 (1):\penalty0 45--64,
  1970{\natexlab{b}}.

\bibitem[Smolentsev(2007)]{smolentsev2007diffeomorphism}
NK~Smolentsev.
\newblock Diffeomorphism groups of compact manifolds.
\newblock \emph{Journal of Mathematical Sciences}, 146\penalty0 (6):\penalty0
  6213--6312, 2007.

\bibitem[Souriau(1970)]{souriau1970structure}
Jean-Marie Souriau.
\newblock \emph{Structure des syst{\`e}mes dynamiques}.
\newblock Dunod, Paris, 1970.

\bibitem[Street and Crisan(2021)]{street2021semi}
Oliver~D Street and Dan Crisan.
\newblock Semi-martingale driven variational principles.
\newblock \emph{Proceedings of the Royal Society A}, 477\penalty0
  (2247):\penalty0 20200957, 2021.

\end{thebibliography}

\end{document}